\documentclass[10pt, onecolumn]{IEEEtran}
\IEEEoverridecommandlockouts \def\withcolors{1} \def\withnotes{1}
\def\journal{0}			
\usepackage[normalem]{ulem}


\PassOptionsToPackage{hyphens}{url}
\usepackage[colorlinks=true,citecolor=blue,linkcolor=blue,linktocpage]{hyperref}
\usepackage{makecell}
\usepackage{cite}
\usepackage{graphicx}
\usepackage{caption}
\usepackage[labelformat=simple]{subcaption}

\usepackage{epstopdf}
\usepackage[usenames,dvipsnames]{color}
\usepackage{epsfig}
\usepackage{amssymb}
\usepackage{amsmath}
\usepackage{amsthm}
\usepackage{latexsym}
\usepackage{setspace}
\usepackage{bbm}
\usepackage[top=1in, bottom=1in, left=1in, right=1in]{geometry}
\usepackage{float}
\usepackage{dsfont}
\usepackage[ruled,vlined]{algorithm2e}
\usepackage{xspace}
\usepackage{enumerate}

\usepackage{comment}

\hyphenation{subgaussian}  

\newcommand\numberthis{\addtocounter{equation}{1}\tag{\theequation}}

\newcommand{\dP}{\mathrm{P}}

\newcommand{\bPP}[1]{\dP_{#1}}

\newcommand{\bPr}[1]{{\mathrm{Pr}}\left(#1\right)}

\newcommand{\bEE}[1]{{\mathbb{E}}\left[#1\right]}

\newcommand{\cN}{{\mathcal N}}
\newcommand{\mN}{{\mathbbm N}}
\newcommand{\mR}{{\mathbbm R}}

\newcommand{\cS}{{\mathcal S}}

\makeatletter
\newtheorem*{rep@theorem}{\rep@title}
\newcommand{\newreptheorem}[2]{%
\newenvironment{rep#1}[1]{%
 \def\rep@title{#2 \ref{##1}}%
 \begin{rep@theorem}}%
 {\end{rep@theorem}}}
\makeatother

\newtheorem{theorem}{Theorem}

\newtheorem*{corollary*}{Corollary}
\newtheorem{assumption}{Assumption}

\newtheorem*{assumptions*}{Assumptions}
\newtheorem{lemma}{Lemma}
\newtheorem*{lemma*}{Lemma}

\newtheorem{lemma-app}{Lemma}[section]

\newreptheorem{theorem}{Theorem}
\newreptheorem{lemma}{Lemma}

\theoremstyle{remark}
\newtheorem{remark}{Remark}
\newtheorem*{remark*}{Remark}
\newtheorem*{remarks*}{Remarks}

\theoremstyle{definition}
\newtheorem{definition}{Definition}


\newcommand{\ed}{\stackrel{{\rm def}}{=}}

\def\undertilde#1{\mathord{\vtop{\ialign{##\crcr
$\hfil\displaystyle{#1}\hfil$\crcr\noalign{\kern1.5pt\nointerlineskip}
$\hfil\tilde{}\hfil$\crcr\noalign{\kern1.5pt}}}}}

\newcommand{\ep}{\varepsilon}

\newcommand{\tht}{\theta}



\allowdisplaybreaks



\newcommand{\ns}{n}  
\newcommand{\dmn}{d} 
\newcommand{\sps}{k}  
\newcommand{\nm}{m}  
\newcommand{\normzero}[1]{\left\|{#1}\right\|_0}  

\newcommand{\bx}{X}				
\newcommand{\by}{Y}				
\newcommand{\bw}{W}				
\newcommand{\bphi}{\Phi}		

\newcommand{\sbg}{\text{subG}}
\newcommand{\sbx}{\text{subexp}}
\newcommand{\lmin}{\lambda_{\text{min}}}
\newcommand{\lmax}{\lambda_{\text{max}}}

\newcommand{\se}{{\mathbb{E}}}



\ifnum\journal=1
	\includecomment{onlyjournal}
	\excludecomment{onlyarxiv}
\else
	\includecomment{onlyarxiv}
	\excludecomment{onlyjournal}
\fi

\ifnum\withcolors=1
 
  \newcommand{\ccolor}[1]{{\color{RubineRed}#1}} 
  \newcommand{\hcolor}[1]{{\color{Orange}#1}} 
\else

  \newcommand{\acolor}[1]{{#1}}
  \newcommand{\ccolor}[1]{{#1}}
  \newcommand{\hcolor}[1]{{#1}}
\fi

\ifnum\withnotes=1

  \newcommand{\lnote}[1]{\par\acolor{\textbf{L: }\sf #1}} 
  \newcommand{\cnote}[1]{\par\ccolor{\textbf{C: }\sf #1}} 
  \newcommand{\hnote}[1]{\par\hcolor{\textbf{H: }\sf #1}} 

\else

  \newcommand{\lnote}[1]{}
  \newcommand{\cnote}[1]{}
  \newcommand{\hnote}[1]{}

\fi
\newcommand{\ignore}[1]{\leavevmode\unskip} 


\onehalfspacing
\begin{document}

\title{Sample-Measurement Tradeoff in Support Recovery under a
 Subgaussian Prior}

\author{ \IEEEauthorblockN{Lekshmi Ramesh, Chandra R. Murthy, and Himanshu Tyagi}}

\maketitle {\renewcommand{\thefootnote}{}\footnotetext{
		\noindent The authors are with the Department of
                Electrical Communication Engineering, Indian Institute
                of Science, Bangalore 560012, India.  Email:
                \{lekshmi, cmurthy, htyagi\}@iisc.ac.in.} }
	
	{\renewcommand{\thefootnote}{}\footnotetext{
			\noindent \thanks{This work was financially
                          supported by a research fellowship from the
                          Ministry of Electronics and Information
                          Technology, Govt. of India.}} }
                          
  	{\renewcommand{\thefootnote}{}\footnotetext{
          	\noindent This work appeared in part in ISIT 2019 \cite{Ramesh_ISIT_2019}. }
	
\begin{abstract}
	Data samples from $\mathbb{R}^{d}$ with a common support of size
        $k$ are accessed through $m$ random linear projections (measurements)
        per sample.  It is well-known that roughly $k$ measurements
        from a single sample are sufficient to recover the support.
        In the multiple sample setting, do $k$ \emph{overall} measurements
        still suffice when only $m$ measurements \emph{per sample} are
        allowed, with $m<k$?  
        We answer this question {in the negative} by
        considering a generative model setting with independent
        samples drawn from a subgaussian prior.  We show that 
        $n=\Theta((k^2/m^2)\cdot\log\sps(d-k))$ samples are
        necessary and sufficient to recover the support exactly. In
        turn, this shows that \emph{when $m<k$, $k$ overall measurements are insufficient
        for support recovery}; instead we need about
      $m$ measurements each from $k^{2}/m^2$ samples, i.e.,  $k^{2}/m$ overall measurements are necessary.

\end{abstract}




\section{Introduction}\label{sec:introduction}

  Consider $\ns$ vectors (hereafter, referred to as samples) $X_{1},\ldots,X_{\ns}$ taking
  values in $\mR^{\dmn}$ and with a common support $S \subset [\dmn]$ of cardinality $\sps \ll \dmn$. It is easy to find
this common support by simply checking each coordinate of any single sample $X_i$, but this
will require $\dmn$ measurements. 
It is now well-known from the theory of compressed sensing that a much smaller number of measurements suffices. In particular, given random linear measurements $Y_i=\bphi X_i \in \mR^{\nm}$ of any \emph{single} sample $X_i$, one can recover the support with $\nm=O(\sps \log (\dmn-\sps))$ measurements \cite{Wainwright_TIT_2009}.
However, considering only one sample requires $\nm>\sps$
measurements per sample. It also does not exploit the fact that the samples share a common
support, which can be useful when the measurement process can be parallelized across samples.  
A natural question that
arises then is whether, in the multiple sample setting,
we can reduce the number of measurements
required per sample, and still recover the support correctly.
That is, can we can still recover the unknown support with $\sps$ \emph{overall} measurements (i.e., would $\ns \nm \le \sps$ suffice)?

Surprisingly, this tradeoff between $\nm$ and $\ns$ has not been explored in the literature. In this work, we provide an explicit characterization of this tradeoff.
We examine this question in a natural Bayesian setting and show that when $\nm < \sps$, we will need at least $\sps^2/\nm$
overall measurements. Thus, in  contrast with the $\nm > \sps$
regime where $\sps$ overall measurements from a single sample suffice, the support can still be recovered when $\nm < \sps$, but
a larger number of overall measurements are needed.

\subsection{Key contributions and techniques}
  
Our main result is a characterization of the sample complexity of the support recovery problem.
We will refer to the data vectors $X_{i}$ as samples and the observations $Y_{i}$ as measurements.
We show that in the \textit{measurement-starved} regime of $(\log k)^{2} \le \nm<\sps$,
the minimum number of samples required to recover the 
support correctly with large probability is $\Theta((\sps^2/\nm^2)\log
\sps (\dmn-\sps))$.
The sample-optimal estimator we study entails
forming an estimate 
of the variance of the samples along each of the $d$ coordinates -- we
do this using a simple closed form estimate. Following this, the
coordinates corresponding to the $\sps$ largest variances give an
estimate for the support. We provide a separation condition between the values of the estimate on the true support and outside it, that guarantees exact recovery. 
Our analysis of this condition builds on concentration bounds for the powers of lengths and inner products of the columns of the measurement matrices.

In Figure
\ref{fig:ptx}, we plot the probability of exact support recovery for
our proposed closed-form estimator as a function of the normalized
number of samples. These curves confirm our theoretical results on the scaling of $\ns$, and a
clear phase transition can be seen. We discuss these results  
further in Section \ref{sec:simulations}. 
\begin{figure}[t]
	\centering
	\begin{subfigure}{0.45\linewidth} \centering
		\includegraphics[scale=0.55]{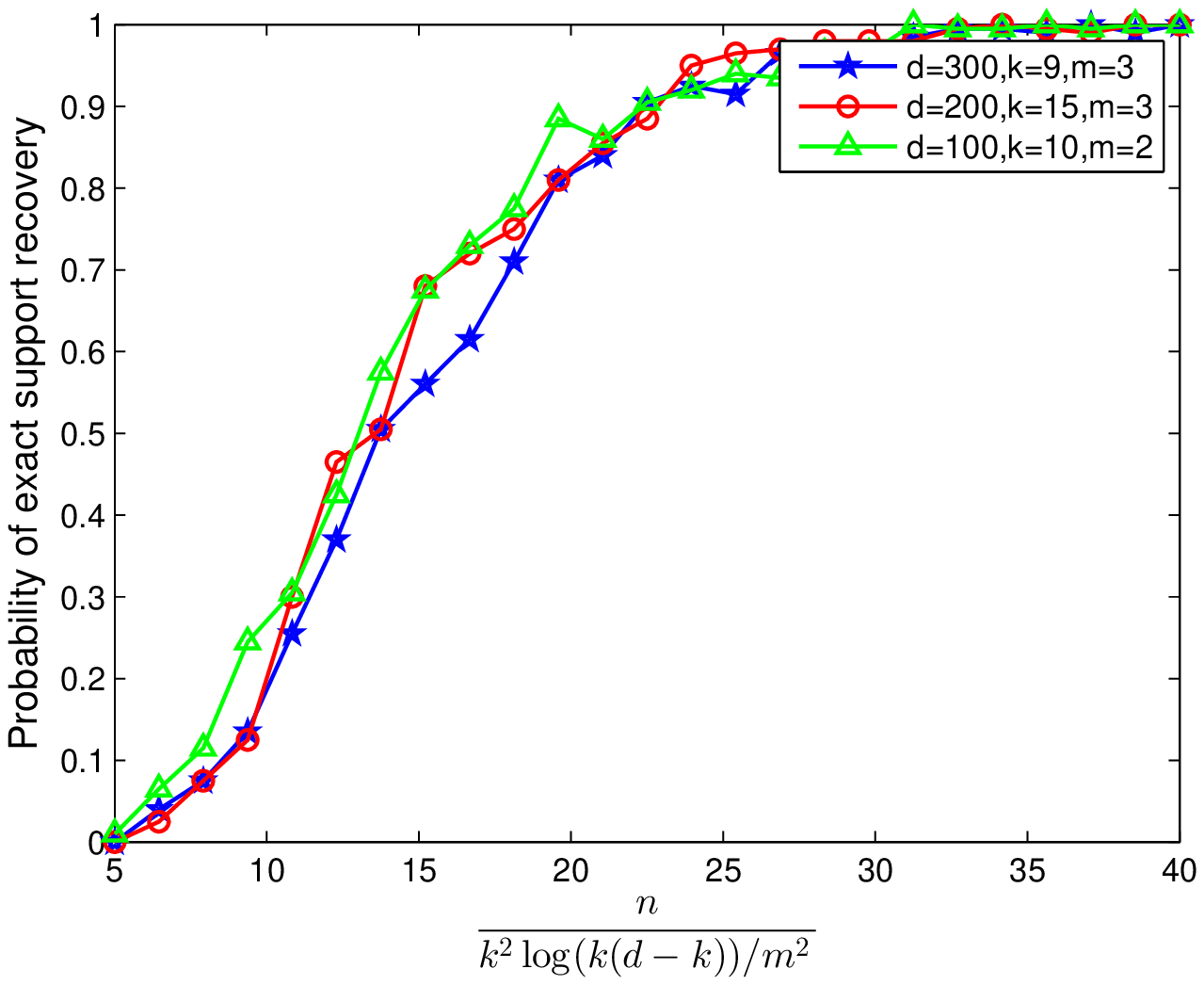}
		\caption{Gaussian prior}\label{fig:ptx_a}
	\end{subfigure}
	\begin{subfigure}{0.45\linewidth} \centering
		\includegraphics[scale=0.55]{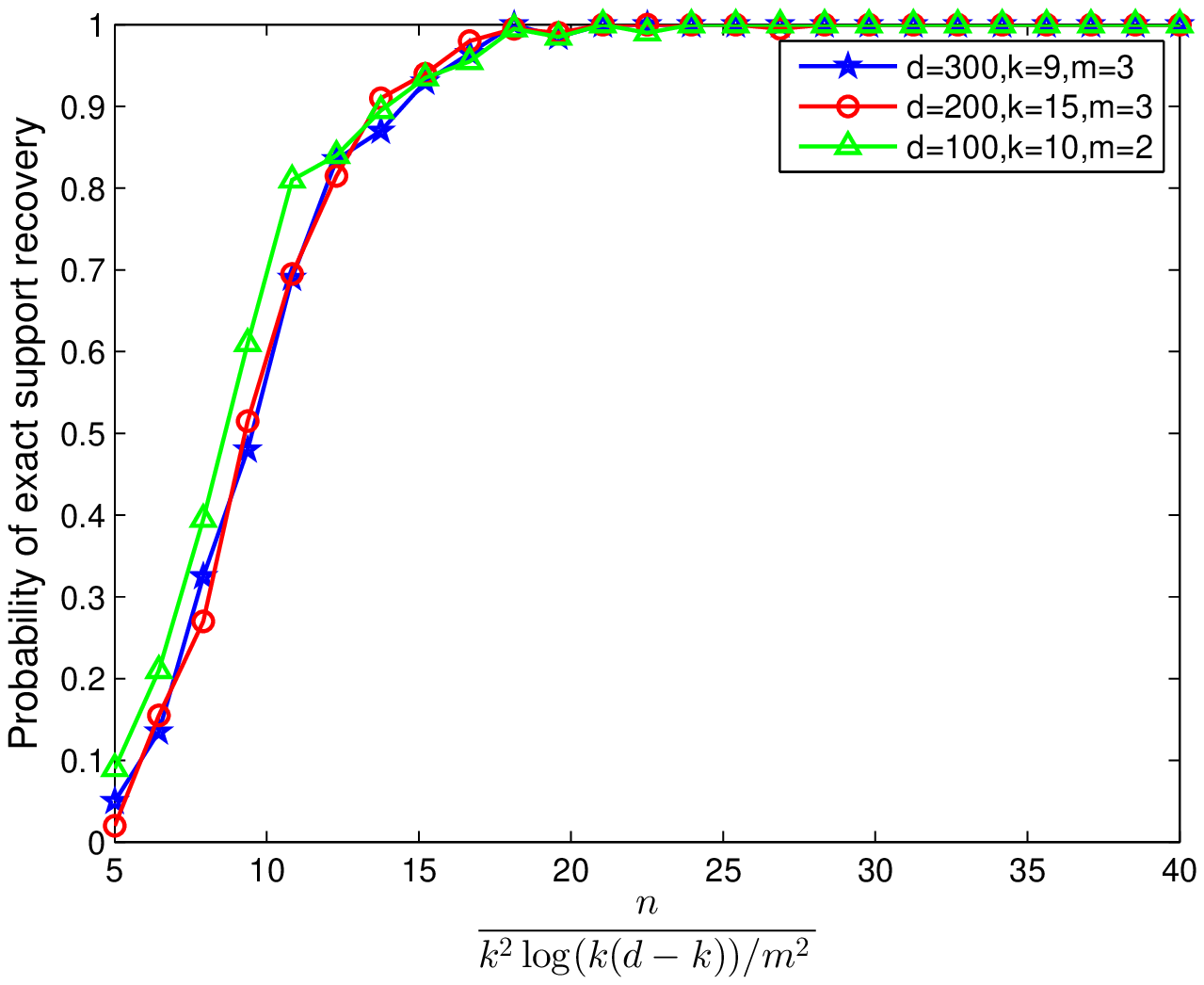}
		\caption{Rademacher prior}\label{fig:ptx_b}
	\end{subfigure}
	\caption{Phase transition of the closed-form estimator.} \label{fig:ptx}
\end{figure}


Our information-theoretic lower bound
is obtained by applying Fano's method to a difficult case with 
supports of size $k$ differing in one entry. The main challenge in the
proof is to characterize the reduction in the distances between the
distributions corresponding to different supports due to compressed linear
measurements. We capture this by a quantity related to the 
spectrum of the Gram matrix of a random Gaussian measurement matrix.

\subsection{Prior work}\label{sec:prior work}
Information-theoretically optimal support recovery in the single sample setting is well-understood; \cite{Wainwright_TIT_2009}, \cite{Fletcher_TIT_2009} and \cite{Aeron_TIT_2010} were some of the first works to look at this problem. In particular, \cite{Wainwright_TIT_2009} shows that for a deterministic input vector, $\nm=\Theta(\sps\log(\dmn-\sps))$ measurements are necessary and sufficient to exactly recover the support using a Gaussian measurement matrix,  establishing that support recovery is impossible in the $\nm<\sps$ regime using a single sample. Following these works, several papers proposed algorithms for the  multiple sample setting, that include convex programming methods \cite{Malioutov_TSP_2005}, \cite{Tropp_SP_cvx_2006}, \cite{Eldar_TIT_2010}, thresholding-based methods \cite{Foucart_SAMPTA_2011}, \cite{Gribonval_JFA_2008}, Bayesian methods \cite{Wipf_TSP_2007} and greedy methods \cite{Tropp_ICASSP_2005}, \cite{Tropp_SP_gr_2006}.  
 	Recovery of the support
 	is important in several practical applications including spectrum
        sensing \cite{Sharma_TVT_2014} and group testing
        \cite{Wang_ITA_2015}. 
 Moreover, in settings where there are
          \emph{multiple} unknown supports, the common support
          recovery algorithm can be used to estimate the union of the
          supports. This can be useful in problems of linear
          regression where there are multiple unknown subsets of
          correlated variables \cite{Obozinski_annals_stats_2011}.
         When $\nm\ge \sps$,
 	support recovery implies recovery of the data vectors
        also. Indeed, given the support, 
 	one can estimate the data vectors by solving a least squares
        problem 
 	restricted to the support.  In this work, we show that when
        $\nm <\sps$, support 
 	recovery is still possible. However, recovery of the data
        vectors is no longer possible, since there are infinitely many
        solutions even after restricting to the support.

A setup similar to ours was studied in~\cite{Park_TSP_2017}, but the
results are not tight in the $\nm
<\sps$ regime. In particular,~\cite{Park_TSP_2017} showed a lower bound
on sample complexity of support recovery of roughly $(\sps/\nm)$, much
weaker than our $(\sps/\nm)^2$ lower bound.
 Another related line of works \cite{Tang_TIT_2010}, \cite{Jin_TIT_2013} studies this problem considering the \emph{same} measurement matrix for all samples, under the assumption that the data vectors are deterministic.  However, none of these works characterize the tradeoff between $\nm$ and $\ns$ when~$\nm<\sps$.

Initial works considering the $m<k$ regime were \cite{Pal_TSP_2015} and \cite{Balkan_SPL_2014}, followed by \cite{Khanna_SPAWC_2017} and \cite{Ramesh_ICASSP_2018}, where it was empirically demonstrated that when multiple samples are available, it is possible to operate in the $\nm<\sps$ regime.
We note that the estimator we consider is similar to the one in \cite{Koochakzadeh_SPIE_2018}. 
 However, the analysis in \cite{Koochakzadeh_SPIE_2018}  is conditioned on the measurement matrix ensemble, and the error probability is expressed in terms of quantities dependent on the measurement matrix. As such, the final dependence of $n$ on $m,k$ and $d$ cannot be inferred from this result. In this work, we overcome these shortcomings by showing
  that the estimator successfully recovers the support for a large class of subgaussian measurement matrix ensembles, and we explicitly characterize the dependence of $n$ on $m,k$ and $d$. We also provide matching lower bounds, which shows the optimality of the estimator.
Two other related works that consider the $m<k$ setting are \cite{Pal_TSP_2015} and \cite{Koochakzadeh_TSP_2018}.
However, the precise characterization of sample complexity is not addressed in any of these works.

Our formulation of support recovery in a Bayesian setting naturally relates to some of the works on covariance estimation. A recent work which looks at the problem of covariance estimation from low-dimensional projections of the data is \cite{Azizyan_TIT_2018}. 
As we will see in the next section, the support recovery problem amounts to estimating a diagonal and sparse covariance matrix, and the general result in \cite[Corollary 3]{Azizyan_TIT_2018} for this specific case is loose and does not give the correct scaling for the sample complexity. Two other works that study covariance estimation from projected samples in the $\nm = 1$ case are \cite{Cai_annals_stats_2015} and \cite{Goldsmith_TIT_2015}. 
However, these results also do not give the correct scaling on the number of samples, when applied to the diagonal sparse case.
Further, since $\nm$ is set to one, the tradeoff between $\nm$ and $\ns$ is not clear.

Our setting is also related
to the recently considered inference under local information
constraints setting of~\cite{Acharya_corr_2018}. We impose information constraints
on each sample by allowing only $\nm$ linear measurements per sample. Roughly, our results say that when local information constraints are placed (namely, $\nm<\sps$), support recovery requires much more than $\sps$ overall measurements.

\subsection{Organization}
In the next section, we formally state the problem and the assumptions
we make in our generative model setting. We then state our main
result, which characterizes the sample complexity of support
recovery. In Section \ref{sec:upper_bound}, we propose and analyze a
closed-form estimator and show it to be sample-optimal. We present
simulation results, followed by the proof of the lower bound in
Section \ref{sec:lower_bound}. We extend the results to the nonbinary variance case in Section~\ref{sec:nonbinary}, and  conclude with discussion and future
work in Section \ref{sec:discussion}. 
Appendicex \ref{app_pf_sep} contains the proof of the main technical result needed
for analyzing our scheme, and builds on results from Appendix \ref{app_lemmas}. Appendix \ref{app_smin} contains a bound for
the fourth moment of the minimum eigenvalue of a Wishart matrix, which
is required for the  lower bound. Some of the background results on
concentration bounds
needed for our proofs are given in Appendix \ref{app_prelim}.

\textit{Notation and Preliminaries.} For a matrix $A_{j}$, $A_{ji}$ denotes its $i$th column, $A_{j}(u,v)$ denotes its $(u,v)$th entry and $(A_{j})_{S}$ denotes the submatrix formed by columns indexed by $S$. 
Also, for a vector $X_{j}$, $X_{ji}$ denotes the $i$th component of $X_{j}$. For symmetric matrices $A$ and $B$, the notation $A\succcurlyeq B$ denotes
that the matrix $A-B$ is positive semidefinite. 
For a vector $a\in\mR^{\dmn}$, $\texttt{diag}(a)$ denotes a diagonal matrix with the entries of $a$ on the diagonal.
A random variable $X$ is subgaussian with variance
parameter $\sigma^2$, denoted 	$X\sim\sbg(\sigma^2)$, if  
\begin{equation}\label{subg_def}
\log \bEE{e^{\theta (X-\bEE{X})}}\leq \theta^2\sigma^2/2,
\end{equation}
for all $\theta\in\mR$.
Similarly, a random variable $X$ is subexponential with parameters $v^{2}$ and $b>0$, denoted $X\sim \sbx(v^{2},b)$, if
\begin{equation}\label{subexp_def}
\log \bEE{e^{\theta(X-\bEE X)}}\le \theta^{2}v^{2}/2,
\end{equation}  
for all $|\theta|<1/b$.
We use the shorthand $X_{1}^{n}$ to denote independent and identically distributed random variables $X_{1},\ldots,X_{n}$.  
When taking expectation of a function of several random variables $Z=f(X_{1},\ldots,X_{n})$, we use $\se_{X_{1}}[Z]$ to denote that the expectation is with respect to the distribution of~$X_{1}$.


\section{Problem formulation and main result}\label{sec:formulation}

We start by considering a Bayesian formulation for support
recovery, where the input comprises $\ns$ independent samples $\bx_1,
\dots, \bx_\ns$ in $\mR^\dmn$, with each $\bx_{i}$ having a zero-mean
Gaussian distribution. We denote the covariance of $\bx_{i}$ by
$K_{\lambda}\ed \mathtt{diag}(\lambda_1, \lambda_2, \ldots,
\lambda_\dmn)$, where the $\dmn$-dimensional vector $\lambda$ has
entries $\lambda_1, \lambda_2, \ldots, \lambda_\dmn$ such that
$\lambda \in \cS_{\sps,\dmn}\ed \left\{u\in \{0,1\}^\dmn\,:\,
\normzero{u}= \sps\right\}$.  That is, the (random) data
vectors have a common support $S=\mathtt{supp}(\lambda)$ of size $\sps$, almost surely.

Each $\bx_i$ is passed through a random $\nm\times \dmn$ measurement
matrix $\bphi_i$, $1\leq i \leq \ns$, with independent, zero-mean
Gaussian entries with variance $1/\nm$, and the observations $\by_i = \bphi_i\bx_i \in\mR^\nm$ 
are made available to a center. Using the measurements
$\by_1,\ldots,\by_\ns$, the center seeks to determine the common
support  $S$ of $\bx_1,\ldots, \bx_\ns$. 
To that end, the center uses
an estimate $\hat{S}: \mR^{\nm\times \ns} \to {[\dmn]\choose \sps}$,
where ${[\dmn]\choose k}$ denotes the set of all subsets of $[\dmn]$ of cardinality $\sps$. 
We seek estimators that can recover the support of $\lambda$ accurately
with probability of error no more than $\delta \ed 1/3$,\hspace{-.2em}\footnote{Note that the value $\delta=1/3$ is chosen here for convenience and
	can be replaced with any acceptable value below~$1/2$. However, our results may not be tight in their dependence on~$\delta$.} namely
\begin{align}
\bPr{\hat{S}(Y^\ns)\neq \mathtt{supp}(\lambda)}\leq
\delta, \quad \forall\, \lambda \in \cS_{\sps, \dmn}.
\label{e:error_criterion}
\end{align}
In the compressed sensing literature, this is usually referred to as a non-uniform recovery guarantee.

We are interested in sample-efficient estimators. The
next definition introduces the fundamental quantity of interest to
us.
\begin{definition}[Sample complexity of support recovery]
	For $\nm, \sps, \dmn\in \mN$, the sample complexity of support recovery 
	$\ns^*(\nm, \sps, \dmn)$ is defined as the minimum number of
	samples $\ns$ for which we can find an estimator $\hat{S}$
	satisfying \eqref{e:error_criterion}.
\end{definition}

Our main result is the following.
\begin{theorem}[Characterization of sample complexity]\label{t:main}
	For {$(\log k)^{2}\leq \nm<\sps/2$} and $1\le\sps\le\dmn-1$, the sample complexity of support
	recovery in the setting above is given by
	\begin{equation*}
	\ns^*(\nm, \sps, \dmn)=\Theta\left(\frac{\sps^2}{\nm^2}\cdot \log\sps(\dmn-\sps)\right). \label{eq:ULbound}
	\end{equation*}
\end{theorem}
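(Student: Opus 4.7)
I would analyze the closed-form estimator described in the introduction: for each coordinate $j\in[\dmn]$, construct a score $\hat A_j$ that estimates the variance of $\bx_{ij}$, and then output the $\sps$ coordinates with the largest scores. A natural candidate is $\hat A_j = \frac{1}{\ns}\sum_{i=1}^{\ns} (\bphi_{ij}^\top \by_i)^2$. A short Gaussian moment calculation, using $\bEE{\|\bphi_{ij}\|^{4}} = 1 + 2/\nm$ and $\bEE{(\bphi_{ij}^\top \bphi_{i\ell})^{2}} = 1/\nm$ for $j\ne \ell$, gives $\bEE{(\bphi_{ij}^\top \by_i)^2} = 1 + (\sps+1)/\nm$ when $j\in\mathtt{supp}(\lambda)$ and $\sps/\nm$ when $j\notin\mathtt{supp}(\lambda)$, a separation of order $1$ regardless of $\sps,\nm$.

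The argument then reduces to two pieces: (i) a deterministic separation condition stating that if $|\hat A_j - \bEE{\hat A_j}| < c$ holds simultaneously for every $j\in[\dmn]$ with a small absolute constant $c$, then the top-$\sps$ rule correctly recovers $\mathtt{supp}(\lambda)$; and (ii) a uniform concentration bound that makes (i) hold with probability at least $1-\delta$. For (ii) each summand $(\bphi_{ij}^\top \by_i)^2$ is a degree-four polynomial in independent subgaussian random variables, so Bernstein/Hanson--Wright-type concentration applies once one controls its subexponential norm. The per-sample variance is dominated by the cross-column contributions $\sum_{\ell\in \mathtt{supp}(\lambda)\setminus\{j\}}(\bphi_{ij}^\top \bphi_{i\ell})^{2}$, which has variance of order $(\sps/\nm)^{2}$; combined with a union bound over the $\sps(\dmn-\sps)$ relevant pairs, this yields the claim once $\ns = \Omega((\sps/\nm)^{2}\log\sps(\dmn-\sps))$. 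The main obstacle is verifying that these quartic Gaussian products concentrate well in the small-$\nm$ regime, which explains the standing hypothesis $(\log\sps)^{2}\le\nm$: when $\nm$ is too small the subexponential tails of the cross terms dominate and a direct Bernstein bound is no longer tight.

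\textbf{Plan for the lower bound (converse).} I would apply Fano's inequality to the $\sps(\dmn-\sps)$ hypotheses obtained by swapping a single index of a fixed reference support $S_{0}\in\cS_{\sps,\dmn}$ with a single index from its complement. Writing $\bPP{\lambda}$ for the joint law of $(Y^{\ns},\bphi^{\ns})$ when the underlying vector is $\lambda$, the standard Fano bound with error tolerance $1/3$ gives
\begin{equation*}
\ns\cdot \!\!\!\max_{\lambda,\lambda'\,:\,|\mathtt{supp}(\lambda)\triangle\mathtt{supp}(\lambda')|=2}\!\!\! \KL{\bPP{\lambda}}{\bPP{\lambda'}} \;\gtrsim\; \log\sps(\dmn-\sps).
\end{equation*}
Since $\bphi^{\ns}$ is independent of $\lambda$, the KL divergence factorizes across samples and reduces to the expectation over a single random $\bphi$ of the Gaussian KL between $\mathcal{N}(0,\bphi K_\lambda\bphi^\top)$ and $\mathcal{N}(0,\bphi K_{\lambda'}\bphi^\top)$.

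The crux is to show that this per-sample KL is at most of order $(\nm/\sps)^{2}$. For $\lambda,\lambda'$ that differ in exactly one index, the two $\nm\times \nm$ covariances differ by a symmetric rank-$2$ update $\Delta = \bphi_{j^{*}}\bphi_{j^{*}}^{\top} - \bphi_{j_{0}}\bphi_{j_{0}}^{\top}$, and the Gaussian KL between close-by zero-mean Gaussians is controlled by $\mathrm{tr}((C^{-1}\Delta)^{2})$ with $C = \bphi K_{\lambda}\bphi^\top = \bphi_{S}\bphi_{S}^\top$. In expectation $\|\Delta\|$ is of constant order while the typical smallest eigenvalue of the Wishart-like matrix $\bphi_{S}\bphi_{S}^\top$ is of order $\sps/\nm$, yielding a per-sample KL bound of order $(\nm/\sps)^{2}$ on average. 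The chief technical difficulty is that $C$ can be arbitrarily ill-conditioned on rare events, so one must bound a higher negative moment of $\lmin(\bphi_{S}\bphi_{S}^\top)$; this is precisely the fourth-moment bound the authors isolate in Appendix~\ref{app_smin}. Plugging the resulting KL estimate into Fano then delivers $\ns = \Omega((\sps/\nm)^{2}\log\sps(\dmn-\sps))$, matching the upper bound up to constants.
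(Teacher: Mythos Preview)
Your proposal is correct and tracks the paper's argument closely on both halves: the same score statistic and top-$\sps$ rule for the upper bound, analyzed by first conditioning on $\bphi_1^\ns$ to get a subexponential tail and then showing the resulting separation condition holds with high probability over the ensemble (this is precisely the paper's Lemma~\ref{lem:sep}); and the same single-swap Fano construction for the lower bound, reducing to a per-sample Gaussian KL and invoking the fourth-moment bound on $\lmin(\bphi_S\bphi_S^\top)$ of Lemma~\ref{lem_smin}.

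The one place where your execution differs from the paper's is the KL step in the converse. Your control ``$D\lesssim\mathrm{tr}((C^{-1}\Delta)^2)$'' is only the second-order Taylor approximation; the exact bound is $D\le\tfrac12\sum_i(\mu_i-1)^2/\mu_i$ with $\mu_i$ the eigenvalues of $C^{-1}(C+\Delta)$, and it is the extra $1/\mu_i$ that blows up on the ill-conditioned events you correctly flag. The paper avoids working with the generalized eigenvalues $\mu_i$ altogether: it bounds $\mathrm{Tr}(A_{S_0}^{-1}A_{S_u})\le\sum_i a_i/b_i$ via a trace rearrangement inequality (Lemma~\ref{lem_traceAB}), applies the scalar bound $x-1-\log x\le(x-1)^2/x$ to obtain $D\le\tfrac12\sum_i(a_i-b_i)^2/(a_ib_i)$, then uses Hoffman--Wielandt to relate $\sum_i(a_i-b_i)^2$ to $\|\Delta\|_F^2$ and two Cauchy--Schwarz steps to decouple the numerator from the denominator $1/(a_mb_m)$, so that the fourth-moment bound on $a_m^{-1}$ applies cleanly. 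Your route can be made rigorous with essentially the same ingredients, but the paper's decomposition into separate spectra is what allows the negative-moment bound to enter without dragging along additional factors of $\lambda_{\max}$.
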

\begin{remark}
	{Our formulation  assumes that the support size $\sps$
	is known. That said, our proposed estimator extends easily to the setting
	where we only have an upper bound of $\sps$ on the support size, and we
	seek to output a set of indices containing the support.}
\end{remark}
\begin{remark}
	{We expect the scaling in Theorem \ref{t:main} to hold good even when $m<(\log k)^{2}$. In fact, our lower bound result continues to hold for $m=1$. The current upper bound proof, however, requires $\nm\ge(\log k)^{2}$. We discuss this further in Appendix \ref{app_pf_sep}}. 
\end{remark}

We provide the optimal estimator and prove the upper bound in 
Section~\ref{sec:upper_bound} and the information-theoretic lower bound in Section~\ref{sec:lower_bound}.
Our proof yields a lower bound when $\nm<\alpha \sps$ for any $\alpha<1$.

Our proposed estimator and its analysis applies to a much
broader setting involving subgaussian priors (see \eqref{subg_def} for definition).
For $\bx_1^\ns$, we can use any prior with
subgaussian distributed entries, i.e., the entries of $\bx_i$ are independent and zero-mean with 
$\bEE{\bx_{i,j}^2}=\lambda_j$ for $\lambda\in  \cS_{\sps, \dmn}$ and  
$\bx_{i,j}\sim\sbg(\lambda_j^\prime)$, where $\lambda_j^\prime$ is the
variance parameter for the subgaussian 
random variable $\bx_{i,j}$. Our analysis will go through as long as
the variance and variance
parameters differ only up to a constant factor. 

Also, the measurement matrices $\bphi_i$ can be chosen to have
independent, zero-mean subgaussian distributed entries in place of
Gaussian. However, as above, we assume that the variance and variance
parameter of each entry are the same up to a multiplicative constant
factor. Further, we assume that the fourth moment of the entries of $\bphi_{i}$ is of the order of the square of the
variance. Two important ensembles that satisfy these properties are the Gaussian ensemble and the Rademacher ensemble.

For clarity, we restate our assumptions below. These assumptions are required for the analysis of our estimator; the lower bound proof is done under the more restrictive setting of Gaussian measurement matrix ensemble (which implies a lower bound for the subgaussian ensemble also). 
\begin{assumption}\label{assump_1}
		The entries of $\bx_{i}$, $i\in[\ns]$, are independent and zero-mean with 
		$\bEE{\bx_{i,j}^2}=\lambda_j$ for $\lambda\in  \cS_{\sps, \dmn}$ and  
		$\bx_{i,j}\sim\sbg(c \lambda_j)$, where $c$ is an absolute constant;
\end{assumption}
\begin{assumption}\label{assump_2}
	The entries of $\bphi_{i}$, $i\in[\ns]$, are independent and zero-mean with $\bEE{\bphi_{i}(u,v)^{2}}=1/\nm$,  $\bphi_{i}(u,v)\sim\sbg(c^\prime/\nm)$,  
	and $\bEE{\bphi_{i}(u,v)^{4}}=c^{\prime\prime}/\nm^{2}$,  
	where $c^\prime$ and $c^{\prime\prime}$ are absolute constants.
\end{assumption}

	Our results also extend to the case when the data vectors are not necessarily sparse in the standard basis for $\mR^{\dmn}$, i.e., the data vectors can be expressed as $\bx_{i}=B Z_{i}$, $i\in[\ns]$, where $B$ is any orthonormal basis for $\mR^{\dmn}$ and $Z_{i}$s have a common support of size $\sps$. Under the same generative model as before, but for $Z_{i}$ this time, Theorem \ref{t:main} continues to hold. This is because when $\bphi_{i}$ is Gaussian, the \emph{effective} measurement matrix $\bphi_{i}B$ also satisfies the properties we mentioned above, namely, it has independent mean zero Gaussian entries with variance $1/\nm$ and fourth moment $3/\nm^{2}$.

{We have restricted $\lambda$ to binary vectors for the ease of presentation. Later, in Section~\ref{sec:nonbinary}, we will show that our sample complexity results extend almost verbatim to a more general setting with the nonzero coordinates of $\lambda$ taking values between $\lambda_{\min}$ and $\lambda_{\max}$. The only change, in effect, is an additional factor $(\lambda_{\max}/\lambda_{\min})^2$ in the sample complexity of support recovery.}

Finally, we  
can allow noisy measurements $\by_i = \bphi_i\bx_i+\bw_i \in \mR^\nm$
where the noise $\bw_i$ has 
independent, zero-mean subgaussian entries independent of $\bx_i$ and
$\bphi_i$, with variance parameter $\sigma^2$.  

We present the upper bound for this more general setting, along with
our proposed estimator, in the next section. 
\section{The estimator and its analysis}\label{sec:upper_bound}
We will work with the more general  setting described
in the previous section, that is, with subgaussian random variables satisfying assumptions \ref{assump_1} and \ref{assump_2}. In fact, for simplicity, we assume that $\bx_{i,j}$
and $\bw_i$ are subgaussian with variance parameter equal to their
respective variances, {a property known as \textit{strict subgaussianity}}. Also, for the measurement matrix, we work with
the same parameters as those for the 
Gaussian ensemble and set 
\begin{equation*}
\bEE{\bphi_{i}(u,v)^2}=\frac 1\nm, \qquad \bEE{\bphi_{i}(u,v)^4}=\frac
{3}{\nm^2},
\end{equation*}
and assume  that $\bphi_{i}(u,v)$ is
subgaussian with variance parameter $1/m$. These assumptions of
equality can be relaxed to order equality up to multiplicative constants.

\subsection{The estimator}
We now present a closed-form estimator for
$\lambda$. To build heuristics, consider the trivial case where we can
directly access samples
$\{\bx_{i}\}_{i=1}^{n}$. Then, a
natural estimate for $\lambda_{i}$
is the sample variance. But in our setting, we only have access to the
measurements $\{\by_{i}\}_{i=1}^{n}$. We compute the
vector $\bphi_{i}^{\top}\by_{i}$ and treat it as a ``proxy" for
$\bx_{i}$. When $\bphi_{i}^{\top}\bphi_{i}=I$ and the measurements are noiseless, this proxy will indeed
coincide with $\bx_i$. We compute the sample variances using these new
proxy samples and use it to find an estimate for the support
	of $\lambda$.

Formally, we consider the estimate $A$ for the covariance matrix of
$\bx_{i}$s given by
\begin{equation*}
A=\frac{1}{\ns}\sum_{j=1}^{\ns}\bphi_{j}^{\top}\by_{j}\by_{j}^{\top}\bphi_{j}.
\end{equation*}
Note that $A$ is positive semidefinite. We form an intermediate
estimate $\tilde{\lambda}$ for the variance vector $\lambda$ using the
diagonal entries of $A$ as follows:
\begin{align}
\tilde{\lambda}_{i}\ed
A_{ii}=\frac{1}{\ns}\sum_{j=1}^{\ns}(\bphi_{j}^{\top}\by_{j}\by_{j}^{\top}\bphi_{j})_{ii} 
=\frac{1}{\ns}\sum_{j=1}^{\ns}(\bphi_{ji}^{\top}\by_{j})^2,   
\numberthis \label{lambda_tilde}
\end{align}
where $\bphi_{ji}$ denotes the $i$th column of $\bphi_{j}$.  Since we
are only interested in estimating the support, we simply declare indices corresponding to the largest 
$\sps$ entries of $\tilde{\lambda}$ as the support, namely, we sort $\tilde{\lambda}$ to get 
$\tilde{\lambda}_{(1)}\geq \tilde{\lambda}_{(2)} \geq \dots \geq \tilde{\lambda}_{(\dmn)}$ and output
\begin{align}
\tilde{S}= \{(1) , ..., (\sps)\},
\label{supp-est-sort}
\end{align}
where $(i)$ denotes the index of the $i$th largest entry in $\tilde{\lambda}$. 
Note that evaluating $\tilde{\lambda}_i$ requires $O(\ns \nm)$ steps,
whereby the overall computational complexity of (naively) evaluating
our proposed estimator is $O(\dmn \ns \nm)$. 
As noted in Section \ref{sec:prior work}, this estimator is similar to the one considered in \cite{Koochakzadeh_SPIE_2018}. However, unlike \cite{Koochakzadeh_SPIE_2018}, our upper bound analysis provides explicit dependence on all the parameters involved,
and is complemented by a matching lower bound result.


{Before we move to detailed analysis in the next section, we do a quick sanity test for our estimator 
and evaluate its ``expected behavior''.}
An easy calculation shows that $\tilde{\lambda}_{i}$ is an estimate of
$\lambda_{i}$ with a bias depending only on $k,m,$ and $\sigma^2$.
In particular, we have the following result.
\begin{lemma}\label{lem_bias}
Let the estimator $\tilde{\lambda}$ be as defined in \eqref{lambda_tilde}. Then, under Assumptions \ref{assump_1} and \ref{assump_2} with $c=c^{\prime}=c^{\prime\prime}=1$, we have that	
\begin{align*}
\bEE{\tilde{\lambda}_{i}}
=\frac{\nm+1}{\nm}\lambda_{i}+\frac{\sps}{\nm}+\sigma^2,\quad i\in[d],
\end{align*}
where the expectation is with respect to the joint distribution of
$\{\bx_{1}^{\ns},\bphi_{1}^{\ns},\bw_{1}^{\ns}\}$.
\end{lemma}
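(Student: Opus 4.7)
The plan is a direct moment computation: since the summands in \eqref{lambda_tilde} are i.i.d., it suffices to compute $\se[(\bphi_{1i}^\top \by_1)^2]$, after which the claim follows by linearity of expectation. For notational ease I would drop the sample index $j=1$ and write $Y = \bphi X + W$.

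The first step is the decomposition
\[
\bphi_i^\top Y = \bphi_i^\top \bphi X + \bphi_i^\top W = \sum_{l=1}^{d}(\bphi_i^\top \bphi_l)\,X_l + \bphi_i^\top W.
\]
Squaring and taking expectations, the cross term vanishes because $W$ is independent of $(\bphi, X)$ and has zero mean, so
\[
\se\!\left[(\bphi_i^\top Y)^2\right] = \se\!\left[\Big(\sum_{l}(\bphi_i^\top \bphi_l)\,X_l\Big)^2\right] + \se\!\left[(\bphi_i^\top W)^2\right].
\]
For the noise contribution, the independence of the entries of $W$ together with $\se[\bphi(u,i)^2]=1/m$ and $\se[W(u)^2]=\sigma^2$ gives $\se[(\bphi_i^\top W)^2] = m\cdot(1/m)\cdot\sigma^2 = \sigma^2$.

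Next I would handle the data contribution. Using that the $X_l$ are independent, zero mean, with $\se[X_l^2]=\lambda_l$, the double sum collapses to the diagonal:
\[
\se\!\left[\Big(\sum_{l}(\bphi_i^\top \bphi_l)X_l\Big)^2\right] = \sum_{l=1}^{d}\lambda_l\,\se\!\left[(\bphi_i^\top \bphi_l)^2\right].
\]
The fourth moments of the columns are computed by splitting into $l=i$ and $l\neq i$. For $l=i$, expand $(\bphi_i^\top\bphi_i)^2 = \sum_{u,v}\bphi(u,i)^2\bphi(v,i)^2$ and use the assumed fourth moment $\se[\bphi(u,i)^4]=3/m^2$ and independence of distinct entries to obtain $(m+2)/m$. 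For $l\neq i$, independence of the columns of $\bphi$ kills all off-diagonal terms in the expansion and leaves $m \cdot (1/m)\cdot (1/m) = 1/m$.

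Finally, substituting back and using $\sum_{l}\lambda_l = k$ (since $\lambda\in\cS_{k,d}$) yields
\[
\sum_l \lambda_l \,\se\!\left[(\bphi_i^\top \bphi_l)^2\right] = \lambda_i\cdot\frac{m+2}{m} + \frac{k-\lambda_i}{m} = \frac{m+1}{m}\lambda_i + \frac{k}{m},
\]
and adding the noise term $\sigma^2$ gives the claimed identity. There is no real obstacle here; the only thing to be careful about is the fourth-moment calculation for $l=i$, where the $3/m^2$ assumption (matching the Gaussian value) is exactly what is needed to produce the coefficient $(m+2)/m$ and hence the factor $(m+1)/m$ in front of $\lambda_i$ after subtracting the $l=i$ term from $\sum_l \lambda_l = k$.
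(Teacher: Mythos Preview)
Your proof is correct and follows essentially the same approach as the paper: expand $(\bphi_i^\top Y)^2$, use independence and zero means to eliminate cross terms, and then evaluate $\se[\|\bphi_i\|_2^4]=(m+2)/m$ and $\se[(\bphi_i^\top\bphi_l)^2]=1/m$ for $l\neq i$. The only difference is cosmetic---the paper packages these two moment identities into Lemma~\ref{lem_moments} and sums over $l\in S$ rather than weighting by $\lambda_l$, whereas you do the computations inline; the arguments are otherwise identical.
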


\begin{proof}	
We can rewrite the estimator in \eqref{lambda_tilde} as
	\begin{align*}
\tilde{\lambda}_{i}
	=\frac{1}{\ns}\sum_{j=1}^{\ns}\bigg(\sum_{l\in S}\bx_{jl}(\bphi_{ji}^{\top}\bphi_{jl})+\bphi_{ji}^{\top}\bw_{j}\bigg)^2,\quad i\in[\dmn].
	\end{align*}
	
	Taking expectation, we note that,
	\begin{align*}
	\bEE{\tilde{\lambda}_{i}}
		&=\se_{\bphi_{1}^{\ns}}\bigg[\frac{1}{\ns}\sum_{j=1}^{\ns}\bigg(\se_{\bx_{1}^{\ns}}\bigg[\sum_{l\in S}\bx_{jl}^{2}(\bphi_{ji}^{\top}\bphi_{jl})^{2}\bigg] 
		+\se_{\bw_{1}^{\ns}}\bigg[(\bphi_{ji}^{\top}\bw_{j})^{2}\bigg]\bigg)\bigg\rvert\bphi_{1}^{\ns}\bigg]\\
	&=\se_{\bphi_{1}^{\ns}}\bigg[\frac{1}{\ns}\sum_{j=1}^{\ns}\bigg(\sum_{l\in S}(\bphi_{ji}^{\top}\bphi_{jl})^2+\sigma^2\|\bphi_{ji}\|_{2}^{2}\bigg)\bigg],
	\end{align*}
	where the second step uses the fact that $\bx_{j}$ has zero mean entries.

	Using our assumption that the columns of $\bphi_{j}$ have independent mean-zero entries with variance $1/\nm$ and fourth moment $3/\nm^{2}$, it follows from Lemma \ref{lem_moments} in the Appendix that
	\begin{equation*}
	\bEE{\tilde{\lambda}_{i}}=
	\frac{\nm+1}{\nm}\lambda_{i}+\frac{\sps}{\nm}+\sigma^2,\quad i\in[\dmn],
	\end{equation*}
	which establishes the lemma.
\end{proof}

We work with this biased $\tilde{\lambda}$ and analyze its performance in the next section. Since the bias is the same across all coordinates, it does not affect sorting/thresholding based procedures. 
{The key observation here is that the expected values of
the coordinates of $\tilde{\lambda}$ 
 in the support of $\lambda$ exceeds those outside the support, making it an appropriate statistic for support recovery.}

\subsection{And its analysis}\label{sec:p_err}
A high level overview of our analysis is as follows. We first note
that, conditioned on the 
measurement matrices, the entries of $\tilde{\lambda}$ are sums of independent
subexponential random variables (defined in \eqref{subexp_def}). If we can ensure that there is
sufficient separation between the typical values of $\tilde{\lambda}_i$ in the
$i\in S$ and $i^\prime\in S^{c}$ cases, then we can 
distinguish between the two cases. We show that such a separation
holds with high probability for our subgaussian measurement matrix ensemble.

We now present the performance of our estimator.
\begin{theorem}\label{t:upper_bound}
	Let $\tilde{S}$ be the estimator described in
	\eqref{supp-est-sort} and assume that~{$(\log k)^{2}\leq \nm$}. 
 Then, under Assumptions \ref{assump_1} and \ref{assump_2}, $\tilde{S}$
	equals the true support with probability at least $1-\delta$ provided
	\begin{align}\label{thm_ub}
	\ns\ge
	c\bigg(\frac{\sps}{\nm}+1+\sigma^{2}\bigg)^2\log\frac{\sps(\dmn-\sps)}{\delta},
	\end{align}
	for an absolute constant $c$. 
\end{theorem}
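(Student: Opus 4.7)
The plan is to convert exact support recovery into a uniform deviation bound for the coordinates of $\tilde{\lambda}$ and then establish that bound via a Bernstein-type inequality applied to $\tilde{\lambda}_i$ conditionally on the measurement matrices $\bphi_1^\ns$, combined with a separate concentration argument for the random subexponential parameters that arise. The starting observation is that $\tilde{S}=S$ whenever $\min_{i\in S}\tilde{\lambda}_i > \max_{i'\notin S}\tilde{\lambda}_{i'}$. By Lemma~\ref{lem_bias}, with $\lambda_i=1$ for $i\in S$ and $\lambda_{i'}=0$ for $i'\notin S$, the mean gap $\bEE{\tilde{\lambda}_i}-\bEE{\tilde{\lambda}_{i'}}$ equals $(\nm+1)/\nm\ge 1$, so it suffices to show that, with probability at least $1-\delta$, $|\tilde{\lambda}_i-\bEE{\tilde{\lambda}_i}|<1/4$ for every $i\in[\dmn]$; a union bound that distinguishes in-support from out-of-support indices will then account for the $\log\sps(\dmn-\sps)$ factor.

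Next, I would fix $i\in[\dmn]$ and condition on $\bphi_1^\ns$. Writing
\[
\bphi_{ji}^{\top}\by_j = \sum_{l\in S}\bx_{jl}(\bphi_{ji}^{\top}\bphi_{jl}) + \bphi_{ji}^{\top}\bw_j,
\]
this is, conditionally on $\bphi_j$, a linear combination of independent strictly subgaussian variables with conditional variance proxy
\[
V_{ji}\ed\sum_{l\in S}(\bphi_{ji}^{\top}\bphi_{jl})^2 + \sigma^2\|\bphi_{ji}\|_2^2.
\]
Hence $(\bphi_{ji}^{\top}\by_j)^2$ is subexponential in the sense of~\eqref{subexp_def} with parameters of order $V_{ji}^2$ and $V_{ji}$, and $\tilde{\lambda}_i$ is an average of $\ns$ such independent variables. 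A standard Bernstein bound then yields, conditional on $\bphi_1^\ns$ and with $V_{\max,i}\ed\max_j V_{ji}$, that $\mathrm{Pr}\big(|\tilde{\lambda}_i-\bEE{\tilde{\lambda}_i\mid \bphi_1^\ns}|>t\,\big|\,\bphi_1^\ns\big)$ decays like $2\exp(-c\,\ns\min(t^2/V_{\max,i}^2,\,t/V_{\max,i}))$.

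The third step, which I expect to be the main obstacle, is to control the random parameters $V_{ji}$. Their expectations are of order $\sps/\nm+1+\sigma^2$: for $i\in S$ the diagonal piece $\|\bphi_{ji}\|_2^4$ contributes $\approx 1$, the $\sps-1$ cross-term squares contribute $\approx(\sps-1)/\nm$, and the noise term adds $\sigma^2$; for $i\notin S$ only cross terms and noise are present. To prove $V_{\max,i}\le C(\sps/\nm+1+\sigma^2)$ uniformly in $j$ with probability $1-O(\delta/\dmn)$, one needs tail bounds on sums of squared subgaussian inner products and on their fourth powers. This is precisely the point at which the hypothesis $\nm\ge(\log\sps)^2$ enters: it is what is needed to secure sufficiently tight tail bounds on these higher-order column-wise functionals of $\bphi_j$. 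I would isolate this uniform control as a separation lemma, with its proof deferred to Appendix~\ref{app_pf_sep}, and I would use the same moment estimates to verify that $\bEE{\tilde{\lambda}_i\mid \bphi_1^\ns}$ stays close to $\bEE{\tilde{\lambda}_i}$.

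Finally, plugging $V_{\max,i}\le C(\sps/\nm+1+\sigma^2)$ into the conditional Bernstein bound and choosing $t$ to be a constant fraction of the mean gap, I obtain
\[
\mathrm{Pr}\big(|\tilde{\lambda}_i-\bEE{\tilde{\lambda}_i}|>1/4\big) \le \exp\!\Big(-\tfrac{c\,\ns}{(\sps/\nm+1+\sigma^2)^2}\Big) + \tfrac{\delta}{2\dmn}.
\]
Taking a union bound over the $\sps(\dmn-\sps)$ relevant in-support/out-of-support pairs and requiring the right-hand side to be $O(\delta/(\sps(\dmn-\sps)))$ then yields the advertised sample complexity $\ns\ge c(\sps/\nm+1+\sigma^2)^2\log(\sps(\dmn-\sps)/\delta)$.
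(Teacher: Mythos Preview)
Your proposal is correct and follows essentially the same route as the paper: condition on $\bphi_1^\ns$, apply a Bernstein/subexponential tail bound to each $\tilde{\lambda}_i$ with conditional parameters built from $\alpha_{ji}^2=V_{ji}$, and defer the control of these random parameters to a separation lemma (Lemma~\ref{lem:sep}) proved in Appendix~\ref{app_pf_sep}. The paper's packaging differs only in that it introduces an auxiliary threshold estimator and states the separation condition directly in terms of the conditional means $\mu_i=\bEE{\tilde{\lambda}_i\mid\bphi_1^\ns}$ and the finer pair $\big(\tfrac{1}{\ns^2}\sum_j\alpha_{ji}^4,\ \tfrac{1}{\ns}\max_j\alpha_{ji}^2\big)$ rather than your cruder single bound $V_{\max,i}$, which spares the extra step of matching conditional and unconditional means and lets the variance term benefit from averaging over~$j$.
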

\begin{remark}
We note that the result above applies for all $\sps$ and all {$\nm>(\log k)^2$}, and not only to our regime of interest $\nm<\sps$.  
When $\sigma^2=0$ and $k/m>1$, we
obtain the upper bound claimed in Theorem~\ref{t:main}. 	
\end{remark}
\begin{remark}
	Our lower bound on sample complexity implied by \eqref{eq:ULbound} in Theorem~\ref{t:main} is for the noiseless case and therefore does not capture the dependence on $\sigma^{2}$. However, simulation results (see Figure~\ref{fig:ptxnoise} in Section~\ref{sec:simulations}) suggest that the dependence in \eqref{thm_ub} is tight.
		
\end{remark}
\begin{proof}
	While computationally tractable, analyzing our proposed estimator directly may not be easy. Instead, we 
	analyze an alternative {thresholding-based} estimator given by
	\begin{equation}\label{cf_est}
	\hat{\lambda}_{i}=\mathbbm{1}_{\{\tilde{\lambda}_{i}\ge\tau\}}.
	\end{equation}
	We note that if $\lambda=\hat{\lambda}$, the
	largest $\sps$ entries of $\tilde{\lambda}$ must coincide with the
	support of $\lambda$. Therefore, 
	\begin{align}
	\bPr{\tilde{S}\neq \mathtt{supp}(\lambda)}\leq
	\bPr{\hat{S}\neq\mathtt{supp}(\lambda)},
	\label{e:error_relation}
	\end{align}
	where $\hat{S}$ is the support of $\hat{\lambda}$. Using this observation, it suffices to analyze the estimator
	$\hat{\lambda}$ in~\eqref{cf_est}, which will be our focus below.
	
The proof of Theorem~\ref{t:upper_bound} entails a careful analysis of 
tails of $\tilde{\lambda}_i$ and uses standard subgaussian and
subexponential concentration bounds. {To bound the error term in \eqref{e:error_relation}, we rely on the measurement matrix ensemble satisfying a certain separation condition; we denote this event by $E$ and describe it in detail shortly.
Denoting by $S$ the support of $\lambda$, we note that  $\bPr{\hat{S}\ne S}$ can be bounded as
\begin{align*}
\bPr{\hat{S}\ne S}
&\le \bPr{\hat{S}\ne S|E}+\bPr{E^{c}}\\
&\le \sum_{i\in S}
\bPr{\tilde{\lambda}_{i}<\tau|E}+\sum_{i^\prime\in
	S^c}\bPr{\tilde{\lambda}_{i^\prime}\ge\tau|E}+\bPr{E^{c}}.\numberthis \label{p_err_est}
\end{align*}
{We show that the first two terms in the equation above, 
involving probabilities conditioned on the event $E$,} 
can be made small. Also, for the subgaussian measurement ensemble, $E$ occurs with large probability, which in turn implies that the overall error can be made small.}

Our approach
involves deriving tail bounds for $\tilde{\lambda}_i$
conditioned on the measurement matrices, and then
choosing a threshold $\tau$ to obtain the desired bound
for~\eqref{p_err_est}; we derive lower tail bounds for $i\in S$ and
upper tail bounds {for $i^\prime\in S^c$}. {The event $E$ mentioned above corresponds to the measurement ensemble being such that we can find a threshold $\tau$ that allows us to separate these bounds.}

Specifically, note that 
\begin{align*}
\tilde{\lambda}_{i}
&=\frac{1}{\ns}\sum_{j=1}^{\ns}\bigg(\sum_{l\in
	S}\bx_{jl}(\bphi_{ji}^{\top}\bphi_{jl})+\bphi_{ji}^{\top}\bw_{j}\bigg)^2,
\end{align*}
where we used $\by_{j}=\bphi_{j}\bx_{j}+\bw_{j}$. Conditioned on $\bphi_{1}^{\ns}$,
$\tilde{\lambda}_{i}$ is a sum of independent subexponential random
variables. 
Using properties of subexponential random variables and the
connection between subgaussian and subexponential random variables (described in Lemmas \ref{lem_sbx_tail} and  \ref{lem_sbx_sbg} in the appendix), we
get that conditioned on the measurement matrices $\bphi_{1}^{\ns}$, 
	the random variable $\tilde{\lambda}_{i}$ is
\begin{align*}
\sbx\bigg(\frac{c_1}{\ns^2}\sum_{j=1}^{\ns}\alpha_{ji}^4,\frac{c_2}{\ns}\displaystyle\max_{j\in[\ns]}\alpha_{ji}^2\bigg),
\end{align*}
where $c_1$ and $c_2$ are absolute constants and
\begin{align*}
\alpha_{ji}^{2}=\begin{cases}
\|\bphi_{ji}\|_{2}^{4}+\!\!\!\displaystyle\sum_{l\in
	S\backslash\{i\}}\!\!\!(\bphi_{jl}^{\top}\bphi_{ji})^{2}+\sigma^2\|\bphi_{ji}\|_{2}^{2},\quad
i\in S,\\ \displaystyle\sum_{l\in
	S}(\bphi_{jl}^{\top}\bphi_{ji})^{2}+\sigma^2\|\bphi_{ji}\|_{2}^{2},\quad
\text{otherwise}.
\end{cases}
\end{align*}
Using standard tail bounds for
subexponential random variables given in Lemma \ref{lem_sbx_tail}
 and denoting $\mu_{i}\ed\bEE{\tilde{\lambda}_{i}|\bphi_{1}^{\ns}}=\frac{1}{\ns}\sum_{j=1}^{\ns}\alpha_{ji}^{2},
	i\in[\dmn]$, we have for
	$i\in S$, 
	\begin{equation*}
	\bPr{\tilde{\lambda}_{i}<
		\tau\rvert\bphi_{1}^{\ns}}
	\le\exp\bigg(-\min\bigg\{\frac{\ns^2
		(\mu_{i}-\tau)^2}{c_1\sum_{j=1}^{\ns}\alpha_{ji}^4},\frac{\ns
		(\mu_{i}-\tau)}{c_2\displaystyle\max_{j\in
			[\ns]}\alpha_{ji}^2}\bigg\}\bigg),
	\end{equation*}
	and {for $i^\prime\in S^{c}$},
	\begin{equation*}
	\bPr{\tilde{\lambda}_{i^\prime}\geq
		\tau\rvert\bphi_{1}^{\ns}} \le\exp\bigg(-\min\bigg\{\frac{\ns^2
		(\tau-\mu_{i^\prime})^2}{c_1\sum_{j=1}^{\ns}\alpha_{ji^\prime}^4},\frac{\ns
		(\tau-\mu_{i^\prime})}{c_2\displaystyle\max_{j\in
			[\ns]}\alpha_{ji^\prime}^2}\bigg\}\bigg).
	\end{equation*}

We can upper bound the sum of the first two terms in \eqref{p_err_est} by $\delta/2$
by showing that with large probability $\bphi_{1}^{\ns}$
takes values for which we get each term above bounded by roughly
$\delta^{\prime}\ed\delta/(4\max\{(\dmn-\sps), \sps\})$. In particular, using a manipulation of
the expression for exponents, each of the conditional probabilities
above will be less than
$\delta^{\prime}$ if  $\tau$ satisfies the following condition for any
$i\in S$ and $i^\prime\in S^{c}$: 
\begin{align}\label{sep_0}
\mu_{i^\prime}+\nu_{i^\prime}\le\tau\le\mu_{i}-\nu_{i},\numberthis
\end{align}
where
\begin{align*}
\nu_{i}&\ed\max\bigg\{\sqrt{\frac{c_1}{\ns^2}\sum_{j=1}^{\ns}\alpha_{ji}^4\log\frac{1}{\delta^{\prime}}},\frac{c_2}{\ns}\max_{j\in[\ns]}\alpha_{ji}^2\log\frac{1}{\delta^{\prime}}\bigg\},
\end{align*} 
and a similar definition holds for $\nu_{i^{\prime}}$. Thus, the sufficient condition in \eqref{sep_0} can be rewritten as 
\begin{align}\label{sep}
\frac{1}{\ns}\sum_{j=1}^{\ns}\alpha_{ji}^2-\frac{1}{\ns}\sum_{j=1}^{\ns}\alpha_{ji^\prime}^2
\ge &
\max\bigg\{\sqrt{\frac{c_1}{\ns^2}\sum_{j=1}^{\ns}\alpha_{ji}^4\log\frac{1}{\delta^{\prime}}},\frac{c_2}{\ns}\max_{j\in[\ns]}\alpha_{ji}^2\log\frac{1}{\delta^{\prime}}\bigg\}
 \nonumber\\
&+
\max\bigg\{\sqrt{\frac{c_1}{\ns^2}\sum_{j=1}^{\ns}\alpha_{ji^{\prime}}^4\log\frac{1}{\delta^{\prime}}},\frac{c_2}{\ns}\max_{j\in[\ns]}\alpha_{ji^{\prime}}^2\log\frac{1}{\delta^{\prime}}\bigg\}.
\end{align}

{Let $E$ denote the event that for all $i\in S$ and $i^\prime \in S^c$, condition
	\eqref{sep}
	is satisfied by the measurement matrix ensemble. We will show that for $\bphi_{1}^{\ns}$ drawn from the subgaussian ensemble satisfying assumption~\ref{assump_2}, the event $E$ occurs with high probability.}
We establish this claim by showing that each term in \eqref{sep} concentrates well around its expected value and roughly $\ns\nm^2\ge c\sps^2\log(1/\delta^{\prime})$ suffices to guarantee that the separation required in \eqref{sep} holds with large probability. The following result, which we prove in Appendix \ref{app_pf_sep}, shows that \eqref{sep} holds with large probability for all pairs $(i,i^{\prime})\in S\times S^{c}$.
\begin{lemma}\label{lem:sep}
		The separation condition 
\eqref{sep}
	holds simultaneously for all pairs $(i,i^{\prime})\in S\times S^{c}$ with probability at
	least $1-\delta$, over the choice of $\bphi_{1}^{n}$, $X_{1}^{n}$ and $W_{1}^{n}$, if $n\geq c (\sps/\nm +\sigma^2)^2\log(1/\delta^{\prime})$, where $\delta^{\prime}=\delta/(4~\text{max}\{k,d-k\})$.
\end{lemma}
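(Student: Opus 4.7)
The strategy is to prove concentration for each quantity appearing in \eqref{sep} separately, fix a threshold on the LHS that exceeds the RHS with high probability for a single pair $(i,i')$, and then take a union bound over the at most $k(d-k)$ pairs in $S \times S^c$. This naturally yields the rescaled $\delta'= \delta/(4\max\{k,d-k\})$.

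First, I would compute $\mathbb{E}[\alpha_{ji}^2]$ for the two cases. Using $\mathbb{E}\|\bphi_{ji}\|_2^2 = 1$, $\mathbb{E}\|\bphi_{ji}\|_2^4 = 1 + 2/m$, and $\mathbb{E}[(\bphi_{jl}^\top\bphi_{ji})^2]=1/m$ for $l\neq i$ (which follow from Assumption~\ref{assump_2} via Lemma~\ref{lem_moments}), a direct calculation gives
\begin{equation*}
\mathbb{E}[\alpha_{ji}^2] = 1 + \frac{k+1}{m} + \sigma^2 \text{ for } i\in S, \qquad \mathbb{E}[\alpha_{ji'}^2] = \frac{k}{m} + \sigma^2 \text{ for } i'\in S^c,
\end{equation*}
so that the LHS of \eqref{sep} has expectation $1 + 1/m = \Theta(1)$. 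This $\Theta(1)$ gap is the ``signal'' the separation condition must exceed the RHS.

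Next, each $\alpha_{ji}^2$ is a quadratic form in subgaussian vectors and is therefore subexponential. I would invoke Lemma~\ref{lem_sbx_sbg} / standard Hanson--Wright-type arguments to show $\alpha_{ji}^2 \sim \sbx(v^2, b)$ with $v, b = O(k/m + 1 + \sigma^2)$, and then apply Bernstein's inequality (Lemma~\ref{lem_sbx_tail}) to show that with probability $\geq 1-\delta'/4$,
\begin{equation*}
\Bigl|\frac{1}{n}\sum_j \alpha_{ji}^2 - \mathbb{E}[\alpha_{ji}^2]\Bigr| \leq C\Bigl(\frac{k}{m}+1+\sigma^2\Bigr)\sqrt{\frac{\log(1/\delta')}{n}},
\end{equation*}
and likewise for $i'$. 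Summing the two deviations, the LHS of \eqref{sep} is $\geq 1 - C'(k/m+1+\sigma^2)\sqrt{\log(1/\delta')/n}$ with high probability. For the RHS, I would show the $\ell_4$ moment is controlled: $\mathbb{E}[\alpha_{ji}^4]=O((k/m+1+\sigma^2)^2)$, whence an analogous subexponential concentration gives $(1/n)\sum_j \alpha_{ji}^4 = O((k/m+1+\sigma^2)^2)$ with probability $\geq 1-\delta'/4$. Then
\begin{equation*}
\sqrt{\frac{c_1}{n^2}\sum_j \alpha_{ji}^4\log\tfrac{1}{\delta'}} = O\!\Bigl(\bigl(\tfrac{k}{m}+1+\sigma^2\bigr)\sqrt{\tfrac{\log(1/\delta')}{n}}\Bigr),
\end{equation*}
which matches the LHS deviation scale. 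For the $\max$ term, a union bound over $j$ combined with a subexponential tail bound yields $\max_j \alpha_{ji}^2 = O((k/m+1+\sigma^2)\log(n/\delta'))$ w.h.p., so $(c_2/n)\max_j \alpha_{ji}^2\log(1/\delta')$ is lower order under the assumed $n\geq c(k/m+\sigma^2)^2\log(1/\delta')$.

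Putting these pieces together, the LHS of \eqref{sep} is at least $1 - o(1)$ while the RHS is at most $C''(k/m+1+\sigma^2)\sqrt{\log(1/\delta')/n}$, so both sides are separated by a positive margin provided $n \geq c(k/m+\sigma^2)^2\log(1/\delta')$. A union bound over the $k(d-k)$ pairs $(i,i')\in S\times S^c$ (absorbed into $\delta'$) then guarantees the separation condition holds simultaneously with probability $\geq 1-\delta$. The main obstacle I anticipate is Step~3: cleanly bounding $\mathbb{E}[\alpha_{ji}^4]$ and showing its empirical average concentrates. The quantity $\alpha_{ji}^4$ is a degree-8 polynomial in the entries of $\bphi_j$, so expansion produces many cross terms (products of norms and inner products across different columns in $S$); verifying that these terms are of order $(k/m+1+\sigma^2)^2$ rather than larger requires careful bookkeeping. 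I suspect this is exactly where the hypothesis $(\log k)^2 \leq m$ enters, since concentration of the $k-1$ squared inner products $\sum_{l\in S\setminus\{i\}}(\bphi_{jl}^\top\bphi_{ji})^2$ around $(k-1)/m$ needs $m$ large enough to kill the log-factors from the inner union bounds.
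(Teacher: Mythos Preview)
Your overall architecture---concentrate each term in \eqref{sep}, show the LHS is $\Theta(1)$ while the RHS is $O((k/m+1+\sigma^2)\sqrt{\log(1/\delta')/n})$, then union-bound over pairs---is exactly the paper's plan. The mean computation and the identification of the $\Theta(1)$ gap are correct.

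There is, however, a genuine gap in two of your technical claims. First, $\alpha_{ji}^2$ is \emph{not} a quadratic form in subgaussians: for $i\in S$ it contains $\|\bphi_{ji}\|_2^4$, and even for $i'\in S^c$ the terms $(\bphi_{jl}^\top\bphi_{ji'})^2$ are degree-$4$ in the entries of $\bphi_j$. So $\alpha_{ji}^2$ is not subexponential unconditionally, and a direct application of Bernstein/Hanson--Wright to $(1/n)\sum_j\alpha_{ji}^2$ does not go through as stated. The paper circumvents this by splitting $\alpha_{ji}^2$ into its three summands and handling each separately: the lower tail of $(1/n)\sum_j\|\bphi_{ji}\|_2^4$ is obtained via Jensen from the subexponential $(1/n)\sum_j\|\bphi_{ji}\|_2^2$; the inner-product sum $\sum_{l}(\bphi_{jl}^\top\bphi_{ji})^2$ is controlled by \emph{conditioning} on $\{\bphi_{ji}\}_j$ (which makes it a sum of conditionally subexponential variables) and then removing the conditioning via high-probability bounds on $\|\bphi_{ji}\|_2^2$ and $\|\bphi_{ji}\|_2^4$.

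Second, and this is the step you flag yourself, the paper does \emph{not} compute $\bEE{\alpha_{ji}^4}$ or attempt subexponential concentration for $(1/n)\sum_j\alpha_{ji}^4$ (which, being degree-$8$, would be delicate). Instead it uses the crude union bound
\[
\bPr{\tfrac{1}{n^2}\textstyle\sum_j \alpha_{ji}^4 \ge (\mu+t)^2}\le n\,\bPr{\alpha_{1i}^2\ge \sqrt{n}(\mu+t)},
\]
and then bounds the right side by splitting $\alpha_{1i}^2$ into its three pieces again. This union bound over $j\in[n]$ is precisely what injects the $\log n$ factors (of the form $(1/m)\log(n/\varepsilon)$) into the RHS estimate; since $n\asymp (k/m)^2\log(1/\delta')$, these become terms like $(1/\sqrt m)\log(k/m)$, and the hypothesis $m\ge(\log k)^2$ is invoked to make them $O(1)$. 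So your intuition about where $(\log k)^2\le m$ enters is right in spirit, but the mechanism is the union bound over samples in controlling $\sum_j\alpha_{ji}^4$ and $\max_j\alpha_{ji}^2$, not the inner-product concentration per se.
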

{Choosing the probability parameter to be $\delta/2$ in Lemma \ref{lem:sep}, we see that the third term in \eqref{p_err_est} can be at most $\delta/2$, leading to an overall error probability of at most $\delta$.
	Further, noting that $2\log(1/\delta^{\prime})\ge \log(16k(d-k)/\delta)$, we obtain the result claimed in the theorem.}
%
\end{proof}

	\begin{remark}
		As long as the noise variance is sufficiently small, i.e., $\sigma^{2}<\sps/\nm$, our estimator is sample-optimal and achieves the same scaling as the lower bound that we prove in Section \ref{sec:lower_bound}. 
		\end{remark}
	
	\begin{remark}
		The separation condition \eqref{sep} fails to hold for $\ns=1$, regardless of which measurement ensemble is used. This is to be expected when $\nm<\sps$, since from our lower bound for sample complexity, multiple samples are necessary in the $\nm<\sps$ regime. 
\end{remark}

%

\subsection{Simulation results}\label{sec:simulations}
In this section, we numerically evaluate the performance of the closed-form estimator in \eqref{supp-est-sort}. 
Our focus will be on exact support recovery and we will study the performance of our estimator over multiple trials. For our experiments, we use measurement matrices that are independent across samples and have i.i.d. $\cN(0,1/\nm)$ entries. To generate measurements, we first pick a support uniformly at random from all possible supports of size $\sps$. Next, the data vectors are generated according to one of two methods. In the first method, the nonzero entries of the data have i.i.d. $\cN(0,1)$ entries. In the second method, the nonzero entries are i.i.d Rademacher (i.e., $\{+1,-1\}$-valued with equal probability). Both these distributions are subgaussian with variance parameters that are a constant multiple of the respective variances. We generate noiseless measurements $\by_{1}^{n}$ according to the linear model described before.
For a fixed value of $\dmn$, $\sps$, $\nm$ and $\ns$, we generate multiple instances of the problem and provide it as input to the estimator. For every instance, we declare success or failure depending on whether the support is exactly recovered or not and the success rate is the fraction of instances on which the recovery is successful. For our experiments, we performed $200$ trials for every set of parameters. We can see from Figure \ref{fig:ptx} in Section \ref{sec:introduction} that the experimental results closely agree with our predictions. Also, the constant of proportionality is small, roughly between $15$ and $20$.
%

\begin{figure}
	\begin{minipage}[t]{.475\textwidth} 
		\includegraphics[width=\textwidth]{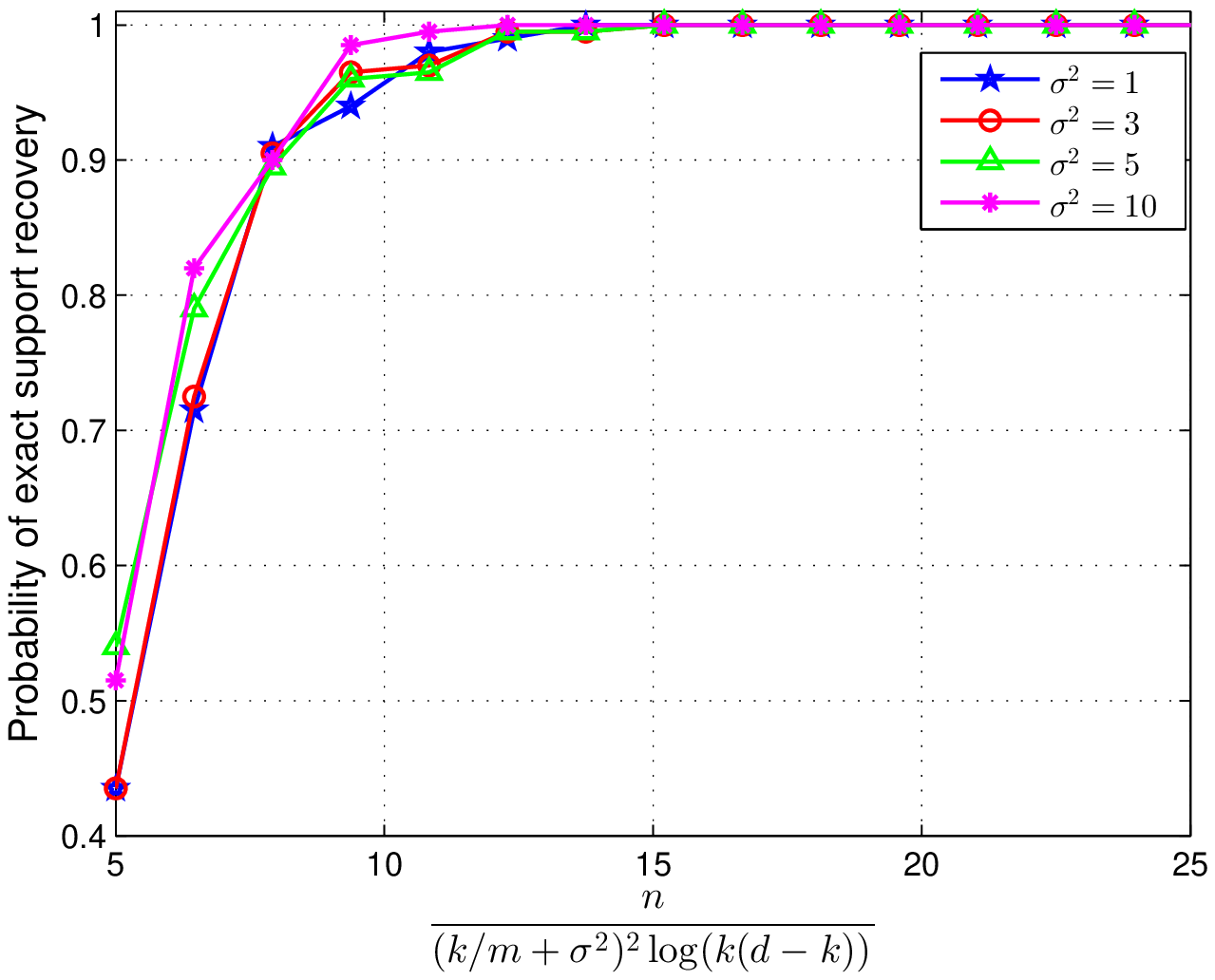}
		\caption{Performance of the closed-form estimator for different noise levels with $d=100,~m=2,~k=10$.}
		\label{fig:ptxnoise}
	\end{minipage}%
	\hfill 
	\begin{minipage}[t]{.475\textwidth}
		\includegraphics[width=\textwidth]{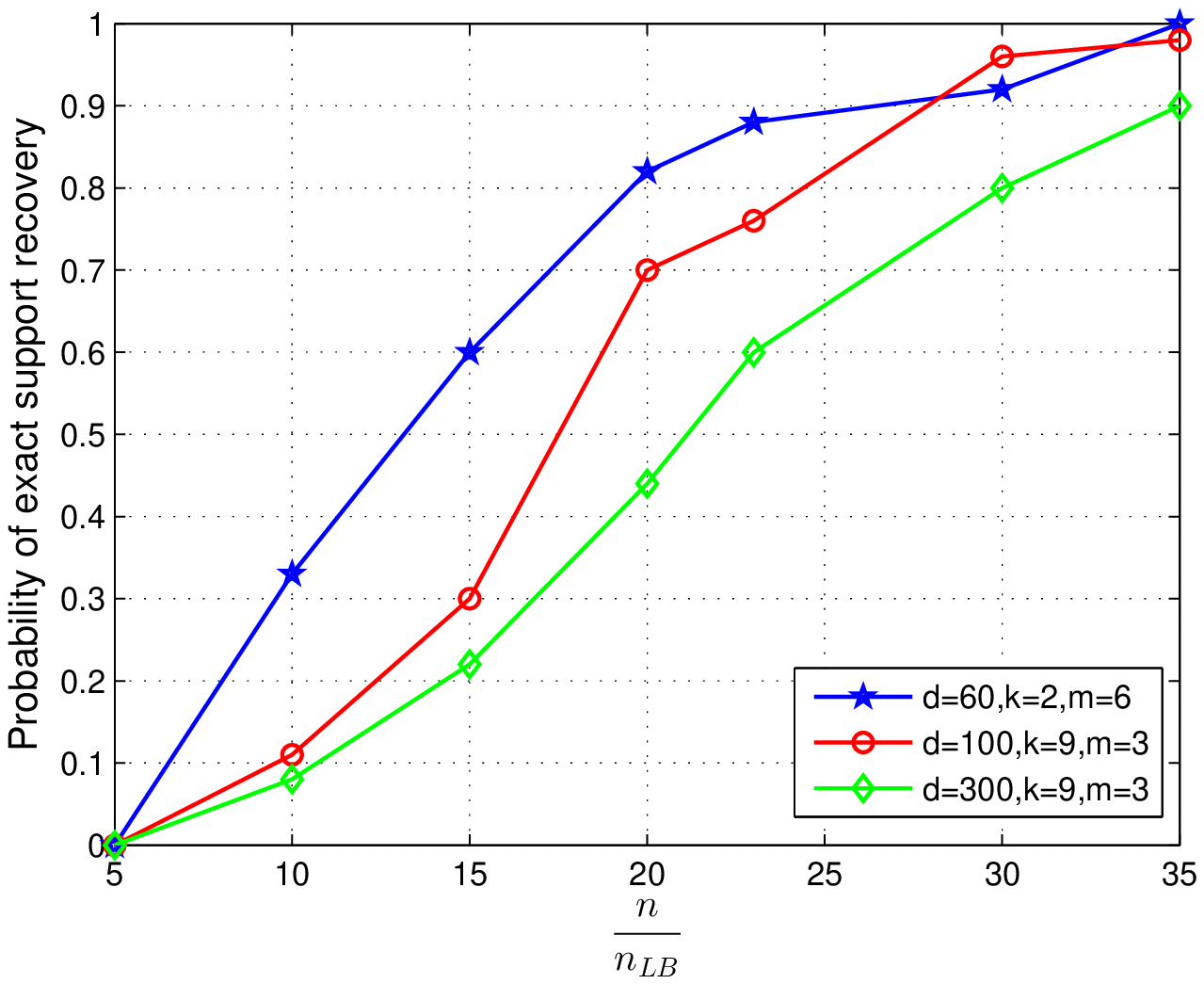}
		\caption{Performance of MSBL in the noiseless case for different parameter values.}
		\label{fig:ptxmsbl}
	\end{minipage}
\end{figure}

 
{ We also perform simulations for the case when the measurements are noisy. In particular, we consider noise vectors $\bw_{i}\stackrel{iid}\sim\cN(0,\sigma^{2}I),~i\in[n]$, and $\bx_{1}^{n}$ Gaussian distributed as described before. We plot the probability of exact support recovery against the normalized number of samples for four different noise values of the noise variance, while the other parameters are kept fixed at $d=100$, $m=2$, and $k=10$. It can be seen from Figure \ref{fig:ptxnoise} that the four curves overlap, indicating that the scaling of $n$ with respect to the noise variance is tight. Finally, Figure \ref{fig:ptxmsbl} shows the performance of MSBL \cite{Wipf_TSP_2007}, where we plot the probability of exact support recovery against the normalized number of samples (the normalization factor $n_{LB}=(k^{2}(1-m/k)^{4}/m^{2}) \log(k(d-k))$ is from the lower bound established in the next section). It can be seen that the curves do not overlap, indicating that MSBL has a different scaling of $n$ with respect to the parameters $m,k,d$ than what is predicted in the lower bound.}


\section{Lower bound}\label{sec:lower_bound}

In this section, we prove the lower bound for sample complexity {claimed in Theorem~\ref{t:main}}. 
We work with the Gaussian setting, where each
$\bphi_{i}$ has independent, zero-mean Gaussian entries with variance
$1/m$. Denote by $S_0$ the set $\{1, \ldots,
\sps\}$ and by $S_{i,j}$, $1\leq i\leq \sps< j \leq \dmn$, the set
obtained by replacing the element $i$ in $S_0$ with $j$ from
$S_0^c$. Let $U$ be distributed uniformly over the pairs $\{(i,j):
1\leq i \leq \sps, \sps+1\leq j \leq \dmn \}$. The unknown support is set to be $S_U$; the random variables
$\bx_{1}^{\ns}$ and linear measurements $Y_i=\bphi_i \bx_i$ are generated as
before. 

We consider the Bayesian hypothesis testing problem where we observe $Y^\ns$ and seek to determine $U$.
Given any support estimator $\hat{S}$, we can use it to find an estimate for the support, which in turn will give an estimate $\hat U$ for $U$. 
Clearly, $\bPr{\hat{U}\ne U }$ equals $\bPr{\hat{S}\neq S_U}$, which must be less than $1/3$ by our assumption. On the other hand, by Fano's inequality, we get
\begin{align*}
\bPr{\hat{U}\neq U}
&\ge
1-\frac{I(\by_{1}^{\ns};U)+1}{\log (\sps(\dmn-\sps))}\\  
&\geq 1- \frac{\max_{u}D(\bPP{Y^n|S_u}\| \bPP{Y^n|S_0})+1}{\log (\sps(\dmn-\sps))},
\end{align*}
where $\bPP{Y^n|S}$ denotes the distribution of the measurements when
the support of $\lambda$ is $S$ (a proof for the second inequality can be found in \cite[Theorem 21]{Csiszar_NOW_2004}). Note that $\bPP{Y^n|S}=
\prod_{i=1}^\ns\bPP{Y_i|S}$ with each $\bPP{Y_i|S}$ having the same
distribution which we denote by $\bPP{Y|S}$. Thus, 
$D(\bPP{Y^n|S_u}\| \bPP{Y^n|S_0}) =  nD(\bPP{Y|S_u}\| \bPP{Y|S_0})$. 

Next, we bound $D(\bPP{Y|S_u}\| \bPP{Y|S_0})$.
Denote by $\bphi_S$ the
$\nm\times \sps$ submatrix of $\bphi$ obtained by restricting to the
columns in $S$ and by $A_S$ the Gram matrix $\bphi_S\bphi_S^{\top}$ of
$\bphi_S$. Further, let $a_{1}\ge\ldots \ge a_{\nm}> 0$ and $b_{1}\ge\ldots \ge
b_{\nm}> 0$ be the respective eigenvalues of $A_{S_u}$ and
$A_{S_0}$. Note that $a_m>0$ and $b_m>0$ hold with probability $1$ since $\nm\le\sps$.

Denoting by $\bPP{Y|S,\bphi}$ the conditional distribution of the
measurement when the measurement matrix is fixed to $\bphi$,  we get
\begin{align*}
D\big(\bPP{Y|S_u,\bphi}\|\bPP{Y|S_0,\bphi}\big)
&=\frac{1}{2}\bigg(\log\frac{|A_{S_0}|}{|A_{S_u}|}+\text{Tr}(A_{S_0}^{-1}A_{S_u})-m\bigg)
\\ &\leq\frac{1}{2}\sum_{i=1}^{\nm}\bigg(\log\frac{b_{i}}{a_{i}}-\bigg(1-\frac{a_{i}}{b_{i}}\bigg)\bigg)\\
 &\le
\frac{1}{2}\sum_{i=1}^{\nm}\frac{(a_{i}-b_{i})^{2}}{a_{i}b_{i}},
\end{align*}
where in the first inequality holds by Lemma \ref{lem_traceAB} and the second inequality holds since {$\log x +  (1-x)/x
\leq (x-1)^2/x$} for all $x> 0$.
Using convexity of the KL divergence, we get
\begin{align*}
D\big(\bPP{Y|S_u}\|\bPP{Y|S_0}\big)
&\leq \frac 12\bEE{\sum_{i=1}^{\nm}\frac{(a_{i}-b_{i})^{2}}{a_{i}b_{i}}}\\
&\leq \frac 12\bEE{\sum_{i=1}^{\nm}\frac{(a_{i}-b_{i})^{2}}{a_{\nm}b_{\nm}}}.
\end{align*} 
Note that the expression on the right does not depend on our choice of
$u$; we fix $u=(1, \sps+1)$.
With an abuse of notation, we denote by $\bphi_j$ the $j$th column of a 
random matrix $\bphi$ with independent $\mathcal{N}(0,1/m)$
distributed entries. Using the Cauchy-Schwarz inequality twice, we get
\begin{align*} 
\bEE{\sum_{i=1}^{\nm}\frac{(a_{i}-b_{i})^{2}}{a_{\nm}b_{m}}}
\le
\sqrt{\bEE{\frac{1}{a_{\nm}^{2}b_{\nm}^{2}}}}\sqrt{\bEE{\bigg(\sum_{i=1}^{\nm}(a_{i}-b_{i})^{2}\bigg)^2}}\\
\le
\sqrt{\bEE{\frac{1}{a_{\nm}^{4}}}}\sqrt{\bEE{\bigg(\sum_{i=1}^{\nm}(a_{i}-b_{i})^{2}\bigg)^2}},
\end{align*}
where in the second inequality we also used the fact that $a_i$ and $b_i$ are identically distributed. The Hoffman-Wielandt inequality\footnote{{For normal matrices $A$ and $B$ with spectra $\{a_{i}\}$ and $\{b_{i}\}$, there exists a permutation $\pi$ of $[n]$ such that $\sum_{i}(a_{\pi(i)}-b_{i})^{2}\le\|A-B\|_{F}^{2}$. When $A$ and $B$ are p.s.d, the left-side is minimum when both sets of eigenvalues are arranged in increasing (or decreasing) order.} }~\cite{Hoffman_1953} can be used to handle the second term on the right-side. In particular, we have
$\sum_{i=1}^{\nm}(a_{i}-b_{i})^{2}\leq
\|A_{S_0}-A_{S_u}\|_{F}^{2}$ where the right-side coincides with
$\|\bphi_{1}\bphi_{1}^{\top}-\bphi_{\sps+1}\bphi_{\sps+1}^{\top}\|_{F}^{2}$ since $u=(1,k+1)$. 
Using the triangle inequality for Frobenius norm and noting that
$\|\bphi_i\bphi_i^\top\|_F$ equals $\|\bphi_i\|_2^2$ for 
a vector $\bphi_i$, we get
\begin{align*}
{\bEE{\sum_{i=1}^{\nm} \frac{(a_{i}-b_{i})^{2}}{a_{m} b_{m}}} }
&\leq
 \sqrt{\bEE{\frac{1}{a_{\nm}^{4}}}}\sqrt{\bEE{(\|\bphi_{1}\|_{2}^{2}+\|\bphi_{\sps+1}\|_{2}^{2})^{4}}}.
\end{align*}
Recall that $\bphi_{1}$ and $\bphi_{\sps+1}$ are independent $\cN(0,\frac{1}{\nm}I)$ distributed random vectors, and therefore $\nm(\|\bphi_{1}\|_{2}^{2}+\|\bphi_{\sps+1}\|_{2}^{2})$ is a chi-squared random variable with $2\nm$ degrees of freedom. Using the expression for the fourth moment of a chi-squared random variable gives us 
\begin{align*}
{\bEE{\sum_{i=1}^{\nm} \frac{(a_{i}-b_{i})^{2}}{a_{m} b_{m}}} }
&\leq \sqrt{\bEE{\frac{1}{a_{\nm}^{4}}}}\sqrt{{\frac{1}{\nm^{4}}\frac{(\nm+3)!}{(\nm-1)!}}}\\
&\leq c^{\prime} \sqrt{\bEE{\frac{1}{a_{\nm}^{4}}}}
\end{align*}
where $c^{\prime}$ is an absolute constant.

It only remains to bound  $\bEE{1/a_{\nm}^{4}}$, where $a_{\nm}$ is the
minimum eigenvalue of the $(\nm \times \nm)$ Wishart
matrix $A_{S_u}$. Using Lemma \ref{lem_smin} in Appendix \ref{app_smin}, we can obtain
\begin{equation*}
\bEE{a_{\nm}^{-4}}
\le\frac{c^{\prime\prime}\nm^4}{\sps^4(1-\nm/\sps)^8}.
\end{equation*}
By combining all the steps above, we get 
\begin{equation*}
\frac 13 \geq \bPr{\hat{S}\neq S_U}\geq 1 - \frac {\frac{c\ns\nm^2}{\sps^2(1-\nm/\sps)^4} +1} {\log
	\sps(\dmn-\sps)},
\end{equation*}
for a constant $c$. Observing that the $(1-\nm/\sps)^{4}$ term can be absorbed into $c$ when $\nm<\sps/2$ yields the desired bound.


\section{{Extension to nonbinary variances}}\label{sec:nonbinary}
{In this section, we extend our results to the case where $\lambda$ is not necessarily binary. Specifically, we
have the following assumption.
\begin{assumption}\label{assump_3}
	The entries of $\bx_{i}$, $i\in[\ns]$, are independent and zero-mean with 
	$\bEE{\bx_{i,j}^2}=\lambda_j$ for $\lambda\in  \{u\in\mR^{d}:\|u\|_{0}=k,\lmin\le u_{i}\le \lmax  \}$ and  
	$\bx_{i,j}\sim\sbg(c \lambda_j)$, where $0<\lmin\le\lmax$ and $c$ is an absolute constant. In addition, {we assume that} $\lmin/\lmax>k/(k+m-1)$.
\end{assumption}
 
Our sample complexity result continues to hold with an additional scaling by a factor of $\lmax^{2}/\lmin^{2}$. 
In particular, we have the following result, where we restrict to the noiseless setting.
\begin{theorem}\label{thm_nonbin}
For $\sigma^2=0$, the sample complexity of support recovery under Assumptions~\ref{assump_2} and~\ref{assump_3} satisfies
\begin{equation*}
 n^{*}(m,k,d)\le C \frac{\lmax^{2}}{\lmin^{2}}\bigg(\frac{\sps}{\nm}+1+\frac{\sigma^{2}}{\lmax}\bigg)^{2}\log\bigg(\frac{\sps(\dmn-\sps)}{\delta}\bigg),
\end{equation*}  
provided $m\ge (\log k)^{2}$, where $C$ is an absolute constant. Further, when $\sigma^{2}=0$, we have the lower bound
\begin{align*}
n^{*}(m,k,d)\ge 
c\frac{\lmax^{2}}{\lmin^{2}}\frac{k^{2}}{m^{2}}\log(d-k+1),
\end{align*}
where $c$ is an absolute constant.
\end{theorem}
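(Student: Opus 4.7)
The plan is to adapt the proofs of Theorem~\ref{t:upper_bound} and the lower bound of Section~\ref{sec:lower_bound}, tracking exactly where the nonbinary variance range $[\lmin,\lmax]$ enters.

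For the upper bound I will retain the closed-form estimator $\tilde{\lambda}$ of \eqref{lambda_tilde} and its thresholded companion $\hat{\lambda}$ of \eqref{cf_est}. Redoing the conditional-expectation computation of Lemma~\ref{lem_bias} for general $\lambda_l\in[\lmin,\lmax]$ yields $\mu_i\ed\bEE{\tilde{\lambda}_i\mid\bphi_1^\ns}=(1/\ns)\sum_j\alpha_{ji}^2$ with
\begin{align*}
\alpha_{ji}^2=\begin{cases}\lambda_i\|\bphi_{ji}\|_2^4+\sum_{l\in S\setminus\{i\}}\lambda_l(\bphi_{jl}^\top\bphi_{ji})^2+\sigma^2\|\bphi_{ji}\|_2^2,& i\in S,\\ \sum_{l\in S}\lambda_l(\bphi_{jl}^\top\bphi_{ji})^2+\sigma^2\|\bphi_{ji}\|_2^2,& i\in S^c.\end{cases}
\end{align*}
Averaging over $\bphi$ produces an expected gap of $\lambda_i(\nm+1)/\nm\ge\lmin(\nm+1)/\nm$ between any in-support and any off-support mean, while the typical magnitude of each $\alpha_{ji}^2$ is $O(\lmax\sps/\nm+\sigma^2)$ by the same concentration bounds used in Appendix~\ref{app_lemmas}. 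Re-running the proof of Lemma~\ref{lem:sep} with these rescaled quantities amounts to forcing the signal gap $\lmin(\nm+1)/\nm$ to exceed the subexponential deviation of scale $(\lmax\sps/\nm+\sigma^2)/\sqrt{\ns/\log(1/\delta')}$, and solving for $\ns$ delivers the claimed upper bound. The condition $\lmin/\lmax>\sps/(\sps+\nm-1)$ in Assumption~\ref{assump_3} enters precisely at this step, keeping the $\lmin$-scaled gap above the $\lmax$-scaled cross-term fluctuations contributed by the shared bulk $T/\nm$.

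For the lower bound I will reuse the Fano--KL chain of Section~\ref{sec:lower_bound} with an adversarial prior: set $\lambda_l=\lmax$ for $l\in\{2,\dots,\sps\}$ and $\lambda_1=\lambda_{\sps+1}=\lmin$, and compare $S_0$ against $S_u$ with $u=(1,\sps+1)$. The conditional covariances then read $A_{S_0}=\lmin\bphi_1\bphi_1^\top+\lmax M$ and $A_{S_u}=\lmin\bphi_{\sps+1}\bphi_{\sps+1}^\top+\lmax M$, where $M=\sum_{l=2}^\sps\bphi_l\bphi_l^\top$. Since $D(\cN(0,A)\|\cN(0,B))$ is invariant under a common rescaling $(A,B)\mapsto(cA,cB)$, dividing both matrices by $\lmax$ reduces the problem to comparing $M+\epsilon\bphi_1\bphi_1^\top$ with $M+\epsilon\bphi_{\sps+1}\bphi_{\sps+1}^\top$ for $\epsilon=\lmin/\lmax\in(0,1]$. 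Re-running the $(a_i-b_i)^2/(a_ib_i)$ bound from the binary case then picks up an overall multiplicative factor of $\epsilon^2=(\lmin/\lmax)^2$, and the subsequent Cauchy--Schwarz, Hoffman--Wielandt, and $\bEE{1/a_\nm^4}$ steps based on Lemma~\ref{lem_smin} are reused verbatim. The main obstacle is the eigenvalue-perturbation step: I must argue that the rank-one perturbations $\epsilon\bphi_i\bphi_i^\top$ leave the denominator $a_ib_i$ bounded below by a constant multiple of the smallest eigenvalue of $M$, which should go through via Weyl's inequality since $\epsilon\le 1$ and $M$ has rank $\nm$ with the same minimum-eigenvalue control already established in Section~\ref{sec:lower_bound}. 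Plugging the tightened KL into Fano's inequality then yields the claimed $(\lmax/\lmin)^2(\sps^2/\nm^2)\log(\dmn-\sps+1)$ lower bound.
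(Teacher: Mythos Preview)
Your proposal is correct and follows essentially the same approach as the paper. For the upper bound, the paper keeps the binary-case $\alpha_{ji}^2$ and writes the separation condition as $\frac{\lmin}{\ns}\sum_j\alpha_{ji}^2-\frac{\lmax}{\ns}\sum_j\alpha_{ji'}^2\ge\lmax(\cdots)$ rather than folding the $\lambda_l$ into $\alpha_{ji}^2$ as you do, but this is only a reparametrization; for the lower bound, the paper uses the same adversarial prior (one $\lmin$-weighted column against a $\lmax$-weighted bulk), bounds the numerator via Hoffman--Wielandt to extract the $\lmin^2$ factor, and controls the denominator via the PSD inequality $A_\lambda\succeq\lmax\sum_{i=1}^{k-1}\bphi_i\bphi_i^\top$ (your Weyl step) before invoking Lemma~\ref{lem_smin} with $k-1$ columns---exactly equivalent to your rescaling-by-$\lmax$ formulation.
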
}
The techniques used for proving the upper and lower bounds remain essentially the same, and we highlight the key changes in the next two subsections.

\subsection{Upper bound}
We start by extending the bias calculation in Lemma \ref{lem_bias} to the more general nonbinary setting. We omit the proof since it follows from straightforward calculations. 

\begin{lemma}\label{lem_bias_nonbin}
	Let the estimator $\tilde{\lambda}$ be as defined in \eqref{lambda_tilde}. Then, under Assumptions \ref{assump_1} and \ref{assump_2} with $c=c^{\prime}=c^{\prime\prime}=1$, we have that	
	\begin{align*}
	\bEE{\tilde{\lambda}_{i}}
	=\frac{\nm+1}{\nm}\lambda_{i}+\frac{1}{\nm}\text{Tr}(K_{\lambda})+\sigma^2,\quad i\in[d],
	\end{align*}
	where the expectation is with respect to the joint distribution of
	$(\bx_{1}^{\ns},\bphi_{1}^{\ns},\bw_{1}^{\ns})$.
\end{lemma}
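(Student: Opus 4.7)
The plan is to follow the proof of Lemma \ref{lem_bias} almost verbatim, the only genuine difference being that we must keep the variances $\lambda_l$ inside the expectation rather than setting them to one. First, I would substitute $\by_j = \bphi_j\bx_j + \bw_j$ into \eqref{lambda_tilde} to write $\tilde{\lambda}_i$ as an average over $j$ of the squared quantity $\sum_{l\in S}\bx_{jl}(\bphi_{ji}^{\top}\bphi_{jl}) + \bphi_{ji}^{\top}\bw_j$.

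Conditioning on $\bphi_1^{n}$, I would take expectation over the data and the noise. Since the entries $\{\bx_{jl}\}_{l\in S}$ are independent and zero-mean (Assumption \ref{assump_3}), and independent of the zero-mean noise $\bw_j$, all cross terms in the square vanish, and the conditional expectation reduces to
\begin{equation*}
\bEE{\tilde{\lambda}_i\mid\bphi_1^{n}} = \frac{1}{n}\sum_{j=1}^{n}\bigg(\sum_{l\in S}\lambda_l\,(\bphi_{ji}^{\top}\bphi_{jl})^2 + \sigma^2\|\bphi_{ji}\|_2^2\bigg),
\end{equation*}
the only change from the binary case being that each inner-product term is now weighted by $\lambda_l$.

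Next, I would take the outer expectation over $\bphi_1^{n}$ by invoking Lemma \ref{lem_moments} from the appendix, which under Assumption \ref{assump_2} (with $c'=c''=1$) supplies the column moments $\bEE{\|\bphi_{ji}\|_2^2}=1$, $\bEE{(\bphi_{ji}^{\top}\bphi_{jl})^2}=1/m$ for $l\neq i$, and $\bEE{\|\bphi_{ji}\|_2^4}=(m+2)/m$. I would then split the inner sum into the $l=i$ contribution (which is nonzero only when $i\in S$) and the $l\in S\setminus\{i\}$ contribution, and collect terms. Writing $\sum_{l\in S}\lambda_l = \mathrm{Tr}(K_\lambda)$ gives the claimed value $\tfrac{m+1}{m}\lambda_i + \tfrac{1}{m}\mathrm{Tr}(K_\lambda) + \sigma^2$ when $i\in S$; for $i\notin S$ we have $\lambda_i=0$, so the same expression trivially reduces to $\tfrac{1}{m}\mathrm{Tr}(K_\lambda)+\sigma^2$, matching the claim on $i\notin S$.

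There is no substantive obstacle here: the argument is a straightforward book-keeping generalization of Lemma \ref{lem_bias}, and as a sanity check, specializing every nonzero $\lambda_l$ to one makes $\mathrm{Tr}(K_\lambda)=k$ and recovers the earlier formula exactly. The structural observation worth emphasizing is that the bias again depends on $\lambda$ only through the single scalar $\mathrm{Tr}(K_\lambda)$ plus the coordinate-specific term $\tfrac{m+1}{m}\lambda_i$, which is what will subsequently separate the $i\in S$ and $i\in S^c$ coordinates in the support-recovery analysis of the nonbinary case.
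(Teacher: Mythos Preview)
Your proposal is correct and is exactly the approach the paper intends: the paper omits the proof of Lemma~\ref{lem_bias_nonbin} entirely, noting that it ``follows from straightforward calculations,'' and your argument simply reruns the proof of Lemma~\ref{lem_bias} with the weights $\lambda_l$ retained, which is precisely what is required. Your sanity check that $\mathrm{Tr}(K_\lambda)=k$ recovers the binary formula is also the observation the paper makes immediately after stating the lemma.
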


	Note that, when $\lambda$ is binary, $\text{Tr}(K_{\lambda})=k$ and the result above reduces to Lemma \ref{lem_bias}.

Our final estimate for the support is the same as before, namely, it computes $\tilde{\lambda}$ and declares the indices of the $\sps$ largest entries as the support. However, as before, we work with a threshold based estimator, with the bias terms in Lemma \ref{lem_bias_nonbin} being accounted for in the threshold.

Following the same series of arguments as in the binary case  
with $\sigma^{2}=0$, and using the assumption that $\lambda_{i}\in [\lmin,\lmax]$, we have that $\bPr{\tilde{S}\neq S}\le\delta$ can be achieved provided that the following condition holds for every $i\in S$ and every $i^{\prime}\in S^{c}$:
\begin{align*}\label{sep-nonbin}
\frac{\lmin}{\ns}\sum_{j=1}^{\ns}\alpha_{ji}^2-\frac{\lmax}{\ns}\sum_{j=1}^{\ns}\alpha_{ji^\prime}^2
\ge&
\lmax\bigg( \max\bigg\{\sqrt{\frac{c_1}{\ns^2}\sum_{j=1}^{\ns}\alpha_{ji}^4\log\frac{1}{\delta^{\prime}}},\frac{c_2}{\ns}\max_{j\in[\ns]}\alpha_{ji}^2\log\frac{1}{\delta^{\prime}}\bigg\}
\nonumber\\
&+
\max\bigg\{\sqrt{\frac{c_1}{\ns^2}\sum_{j=1}^{\ns}\alpha_{ji^{\prime}}^4\log\frac{1}{\delta^{\prime}}},\frac{c_2}{\ns}\max_{j\in[\ns]}\alpha_{ji^{\prime}}^2\log\frac{1}{\delta^{\prime}}\bigg\}\bigg),
\end{align*}
where $\delta^{\prime}=\delta/(2~\text{max}\{k,d-k\})$.
Incorporating the scaling due to $\lmin$ and $\lmax$ into our concentration bounds in the proof of Lemma \ref{lem:sep}, and simplifying, we get that
\begin{equation*}
\ns\ge C~\frac{\lmax^{2}}{\lmin^{2}}\bigg(\frac{k}{m}+1\bigg)^{2}\log\bigg(\frac{\sps(\dmn-\sps)}{\delta}\bigg)
\end{equation*}
samples suffice for $\bPr{\tilde{S}\neq S}\le\delta$, provided $\lmin/\lmax>k/(k+m-1)$.


\subsection{Lower bound}
We assume that the unknown $\lambda$ is uniformly distributed over the set $\{\lambda^{(0)}, \lambda^{(1)},\ldots,\lambda^{(\dmn-\sps)}\}$, with $\lambda^{(i)}\in\mR^{d}$.
The $j$th entry of $\lambda^{(i)}$, denoted $\lambda_{j}^{(i)}$, is given by
\begin{align*}
\lambda_{j}^{(i)}=\begin{cases}
\lmax,\quad\text{if}~j\in[\sps-1],\\
\lmin,\quad\text{if}~j=\sps+i,\\
0,\quad\text{otherwise},
\end{cases}
\end{align*}
for any $i\in\{0,1,\ldots,\dmn-\sps\}$.

Our goal is to characterize the KL divergence between distributions on the measurements arising from two different $\lambda$s in the set we described above, one of which we fix as $\lambda^{(0)}$. Computing this divergence as before, we see that
\begin{align}\label{div-bd-nonbin}
D\big(\bPP{Y|\lambda}\|\bPP{Y|\lambda^{(0)}}\big)
\leq \frac {1}{2}\,\bEE{\sum_{i=1}^{\nm}\frac{(a^{\prime}_{i}-b^{\prime}_{i})^{2}}{a^{\prime}_{\nm}b_{m}^{\prime}}},
\end{align} 
where $\{a_{i}^{\prime}\}_{i=1}^{\nm}$ and $\{b_{i}^{\prime}\}_{i=1}^{\nm}$ denote the eigenvalues of $A_{\lambda}\ed\bphi K_{\lambda}\bphi^{\top}$ and $A_{\lambda^{(0)}}\ed\bphi K_{\lambda^{(0)}}\bphi^{\top}$ respectively. Noting that $\sum_{i=1}^{m}(a_{i}^{\prime}-b_{i}^{\prime})^{2}\le \|A_{\lambda}-A_{\lambda^{(0)}}\|_{F}^{2}=\lmin^{2}\|\bphi_{1}\bphi_{1}^{\top}-\bphi_{\sps+1}\bphi_{\sps+1}^{\top}\|_{F}^{2}$, an application of the Hoffman-Wielandt inequality yields
\begin{align}\label{eval-bd-nonbin}
\bEE{\sum_{i=1}^{\nm}\frac{(a^{\prime}_{i}-b^{\prime}_{i})^{2}}{a^{\prime}_{\nm}b_{m}^{\prime}}}\le
c\lmin^{2}\sqrt{\bEE{\frac{1}{(a_{m}^{\prime})^{4}}}}.
\end{align}
Recall that from Lemma \ref{lem_smin}, we have a bound on the fourth moment of the smallest eigenvalue $a_{m}$ of $A_{S}=\bphi_{S}\bphi_{S}$ for $S\subseteq[\dmn]$. We now try to relate $a_{m}^{\prime}$ and $a_{m}$. We start by noting that
\begin{align*}
A_{\lambda}&=\lmax\sum_{i=1}^{k-1}\bphi_{i}\bphi_{i}^{\top}+\lmin\bphi_{\sps+1}\bphi_{\sps+1}^{\top}\\
&\succcurlyeq \lmax\sum_{i=1}^{k-1}\bphi_{i}\bphi_{i}^{\top},
\end{align*}
{where $A\succcurlyeq B$ if $A-B$ is a positive semi-definite matrix.}
The above inequality in turn gives a relation between the eigenvalues of $A_{\lambda}$ and those of $\lmax\sum_{i=1}^{k-1}\bphi_{i}\bphi_{i}^{\top}$. In particular, for the minimum eigenvalue, we have $a_{m}^{\prime}\ge \lmax a_{m}$. Combining this fact with the inequalities in \eqref{div-bd-nonbin} and \eqref{eval-bd-nonbin}, and using Lemma~\ref{lem_smin}, we get
\begin{align*}
D\big(\bPP{Y|\lambda}\|\bPP{Y|\lambda_{0}}\big)
&\le c^{\prime}\frac{\lmin^{2}}{\lmax^{2}}\sqrt{\bEE{\frac{1}{a_{m}^{4}}}}\\
&\le c^{\prime\prime}\frac{\lmin^{2}}{\lmax^{2}}\frac{\nm^{2}}{(\sps-1)^{2}}\bigg(1-\frac{\nm}{\sps}\bigg)^{-4}.
\end{align*}
This is the same bound as in the binary case, except for an additional scaling by a factor of $\lmin^{2}/\lmax^{2}$. As a consequence of this, we can show, using similar calculations as before, that if 
\begin{equation*}
\ns \le c\frac{\lmax^{2}}{\lmin^{2}}\frac{\sps^{2}}{m^{2}}\bigg(1-\frac{\nm}{\sps}\bigg)^{4}\log(\dmn-\sps+1),
\end{equation*}
then the error probability $\bPr{\hat{S}\neq \mathtt{supp}(\lambda_{0})}\ge 1/3$.

\section{Concluding Remarks}\label{sec:discussion}
Our sample complexity result implies that independent measurements
applied to the same sample are much more useful for support recovery than those applied to independent samples. There are several possible extensions of our results.  

We have shown that covariance-based methods provide a reliable way to  recover the support in the measurement-constrained setting of $m<k$, where traditional sparse recovery methods do not work. This framework can accommodate other kinds of structures on the covariance matrix; one can simply project the closed form estimate onto an appropriate constraint set and then use this estimate to make inferences about the data.
It would be interesting to explore this in more detail.

One can consider using the same measurement matrix
for all samples. In this case, we observe empirically that our
estimate does not perform well, but we do not have a complete theoretical
understanding of this phenomenon for our setting (see \cite[Proposition
2]{Azizyan_TIT_2018} for a related discussion).



For the case where the samples do not share a common support but instead have supports drawn from a small set of allowed supports, our current estimator can be used to recover the union of the allowed supports. Designing estimators for segregating the union into individual supports and labeling the samples based on the supports is an interesting direction for further work.




\begin{appendices}

\section{Proof of Lemma \ref{lem:sep}}\label{app_pf_sep}
We recall the statement of Lemma \ref{lem:sep} here for easy reference.
\begin{lemma-app}\label{l:separation_fxd_index_re}
	For all pairs $(i,i^{\prime})\in S\times S^{c}$, the separation condition 
	\begin{align}
	\frac{1}{\ns}\sum_{j=1}^{\ns}\alpha_{ji}^2-\frac{1}{\ns}\sum_{j=1}^{\ns}\alpha_{ji^\prime}^2
	\ge&
\max\bigg\{\sqrt{\frac{c_1}{\ns^2}\sum_{j=1}^{\ns}\alpha_{ji}^4\log\frac{1}{\delta^{\prime}}},\frac{c_2}{\ns}\max_{j\in[\ns]}\alpha_{ji}^2\log\frac{1}{\delta^{\prime}}\bigg\}
\nonumber\\
&+
\max\bigg\{\sqrt{\frac{c_1}{\ns^2}\sum_{j=1}^{\ns}\alpha_{ji^{\prime}}^4\log\frac{1}{\delta^{\prime}}},\frac{c_2}{\ns}\max_{j\in[\ns]}\alpha_{ji^{\prime}}^2\log\frac{1}{\delta^{\prime}}\bigg\}\numberthis\label{sep_re}
	\end{align}
	holds with probability at
	least $1-\delta$ if $n\geq c (\sps/\nm +\sigma^2)^2\log(1/\delta^{\prime})$ 
{and $m\ge (\log k)^{2}$}, where $\delta^{\prime}=\delta/(4~\text{max}\{k,d-k\})$.
\end{lemma-app} 
\begin{proof}
The proof involves studying the tail behaviour of each term in \eqref{sep_re}. In particular, we derive a lower bound on the first term and upper bounds on the remaining terms that hold with high probability over the subgaussian measurement ensemble, and establish conditions under which the separation in \eqref{sep_re} holds for a fixed pair $(i,i^{\prime})$. A union bound over all $k(d-k)$ pairs then gives us the result claimed in the lemma. 
The key technicality is to keep track of the leading term that is contributed by each of the terms in \eqref{sep_re}. To get a rough idea of the behaviour of these terms, recall that $\alpha_{ji}$ depends on inner products between the columns $\Phi_{j}$. In particular, it involves the sum of $O(k)$ inner product squared terms, and this scales as $O(k/m)$ in expectation. The right side in \eqref{sep_re} thus effectively leads to a $O(1)$ term. The left side, on the other hand, gives rise to a $O(\sqrt{1/n\cdot (k/m)^{2}})$ term in expectation, leading to the overall requirement of $n=O(k^{2}/m^{2})$. In what follows, we make these arguments precise using tail bounds for each of the terms in \eqref{sep_re}.

For clarity of presentation, details of the tail bounds for each term in \eqref{sep_re} are presented in Appendix~\ref{app_lemmas}, which in turn build on standard concentration bounds for subgaussian and subexponential random variables reviewed as preliminaries in Appendix \ref{app_prelim}. Also, while analyzing each term in \eqref{sep_re}, we use the same symbol $\mu$ to denote the expectation of that term to keep notation simple. Similarly, the definitions of terms like $\mu_{1}$, $\mu_{2}$ and $\mu_{3}$ will be clear from the context.

 For the first term on the left side of \eqref{sep_re}, we study the behaviour of its left tail. That is, we look at $\bPr{{\frac{1}{\ns}\sum\limits_{j=1}^{\ns}\alpha_{ji}^{2}\le\mu-t}}$, where recall
\begin{equation}\label{alph2}
\alpha_{ji}^{2}=\|\bphi_{ji}\|_{2}^{4}+\sum_{l\in S\backslash\{i\}}(\bphi_{jl}^{\top}\bphi_{ji})^{2}+\sigma^2\|\bphi_{ji}\|_{2}^{2},
\end{equation}
and
$\mu=\bEE{(1/n)\sum_{j=1}^{\ns}\alpha_{ji}^2}.$ Further, let $\mu_{1}, \mu_{2}$ and $\mu_{3}$ denote the mean of each of the three terms.
 By a union bound argument, it suffices to bound the normalized sum of each of the three terms separately. Notice that all these terms essentially depend on the lengths of the columns or the inner products between the columns of the measurement matrix, and our goal will be obtain concentration bounds for these terms. While $\|\bphi_{ji}\|_{2}^{2}$ is clearly subexponential, $\|\bphi_{ji}\|_{2}^{4}$ and $(\bphi_{jl}^{\top}\bphi_{ji})^{2}$ have heavier tails.
 In Appendix \ref{app_lemmas}, we provide results on the tail behaviour of these terms. Using Lemma \ref{lem_z4left}, we have for any $t>0$,
\begin{align*}
\bPr{{\frac{1}{\ns}\sum_{j=1}^{\ns}\|\bphi_{ji}\|_{2}^{4}\le \mu_1-t}}\le\frac{\ep}{3}
\end{align*}
for
\begin{align*}
\mu_{1}-t=\text{min}\bigg\{\bigg(1-\sqrt{\frac{c_1}{\nm\ns}\log\frac{3}{\varepsilon}}\bigg)^{2},\bigg(1-\frac{c_2}{\nm\ns}\log\frac{3}{\varepsilon}\bigg)^{2}\bigg\}.
\end{align*}
Further, from Lemma \ref{lem_ip_conc}, we have that when $n\ge(c_{2}^{2}/{c_{1}})\log(12/\varepsilon)$,
\begin{align*}
\bPr{{\frac{1}{\ns}\sum_{j=1}^{\ns}\sum_{t\in S\backslash\{i\}}(\bphi_{jt}^{\top}\bphi_{ji})^{2}\le \mu_2-t}}\le\frac{\ep}{3}
\end{align*}
for
\begin{align*}
\mu_{2}-t=&\frac{k-1}{\nm}\bigg(1-\sqrt{\frac{c_1}{\nm\ns}\log\frac{12}{\varepsilon}}\bigg)\\
&- \sqrt{\frac{1}{\nm\ns}\log\frac{12}{\varepsilon}}~\text{max}\bigg\{\sqrt{
	c_{1}\frac{k-1}{m}},c_{2}\bigg\}\text{max}\bigg\{\bigg(1+\sqrt{\frac{c_1}{\nm}\log\frac{12\ns}{\varepsilon}}\bigg),\bigg(1+{\frac{c_1}{\nm}\log\frac{12\ns}{\varepsilon}}\bigg)\bigg\}.
\end{align*}
Finally, from Lemmas \ref{lem_sbx_sbg} and \ref{lem_sbx_txf}, we can see that $\|\bphi_{ji}\|_{2}^{2}$ is subexponential with parameters $(c_{1}/\nm,c_{2}/\nm)$ and that $(\sigma^{2}/\ns)\sum_{j=1}^{\ns}\|\bphi_{ji}\|_{2}^{2}$ is subexponential with parameters $(c_{1}\sigma^{4}/\nm\ns,c_{2}\sigma^{2}/\nm\ns)$. Using the subexponential concentration bound from Lemma \ref{lem_z2leftright} gives
\begin{align*}
\bPr{{\frac{\sigma^2}{\ns}\sum_{j=1}^{\ns}\|\bphi_{ji}\|_{2}^{2}\le \mu_3-t}}\le\frac{\ep}{3}
\end{align*} 
for
\begin{align*}
\mu_{3}-t=\sigma^2\bigg(1-\text{max}\bigg\{\sqrt{\frac{c_1}{\nm\ns}\log\frac{3}{\varepsilon}},\frac{c_2}{\nm\ns}\log\frac{3}{\varepsilon}\bigg\}\bigg).
\end{align*}

Combining these results using a union bound step, we see that
\begin{align*}
\bPr{\frac{1}{\ns}\sum_{j=1}^{\ns}\alpha_{ji}^{2}\le \mu-t}\le \varepsilon,
\end{align*}
for
\begin{align*}\label{conc_term1}
\mu-t=&~\bigg(1-\sqrt{\frac{c_1}{\nm\ns}\log\frac{3}{\varepsilon}}\bigg)^{2}+\frac{k-1}{\nm}\bigg(1-\sqrt{\frac{c_1}{\nm\ns}\log\frac{12}{\varepsilon}}\bigg)\nonumber\\
&
- \sqrt{\frac{1}{\nm\ns}\log\frac{12}{\varepsilon}}~\text{max}\bigg\{\sqrt{
	c_{1}\frac{k-1}{m}},c_{2}\bigg\}\text{max}\bigg\{\bigg(1+\sqrt{\frac{c_1}{\nm}\log\frac{12\ns}{\varepsilon}}\bigg),\bigg(1+{\frac{c_2}{\nm}\log\frac{12\ns}{\varepsilon}}\bigg)\bigg\}\nonumber\\
&+\sigma^2\bigg(1-\sqrt{\frac{c_1}{\nm\ns}\log\frac{3}{\varepsilon}}\bigg),
\end{align*}
when $n\ge(c_{2}^{2}/{c_{1}})\log(12/\varepsilon)$.


We now consider the second term on the left side of  \eqref{sep_re}, and observe that it consists of terms similar to the ones we encountered in the previous calculation. Our focus will be on the \emph{right} tail this time, i.e., we will study $\bPr{\frac{1}{\ns}\sum\limits_{j=1}^{\ns}\alpha_{ji^\prime}^{2}\ge \mu+t}$ for $i^\prime\in S^{c}$, where
\begin{equation*}
\alpha_{ji^\prime}^{2}=\sum_{l\in S}(\bphi_{jl}^{\top}\bphi_{ji^\prime})^{2}+\sigma^2\|\bphi_{ji^\prime}\|_{2}^{2}.
\end{equation*} 
We use Lemma \ref{lem_z2leftright} to get
\begin{align*}
\bPr{{\frac{\sigma^2}{\ns}\sum_{j=1}^{\ns}\|\bphi_{ji^\prime}\|_{2}^{2}\ge \mu_2+t_{2}}}\le\frac{\ep}{2}
\end{align*} 
for 
\begin{align*}
\mu_{2}+t_{2}=\sigma^2\bigg(1+\text{max}\bigg\{\sqrt{\frac{c_1}{\nm\ns}\log\frac{2}{\varepsilon}},\frac{c_2}{\nm\ns}\log\frac{2}{\varepsilon}\bigg\}\bigg),
\end{align*}
and Lemma \ref{lem_ip_conc_2} to get
\begin{align*}
\bPr{{\frac{1}{\ns}\sum_{j=1}^{\ns}\sum_{l\in S}(\bphi_{jl}^{\top}\bphi_{ji^\prime})^{2}\ge \mu_1+t_{1}}}
\le\frac{\ep}{2}
\end{align*}
for
\begin{align*}
\mu_{1}+t_{1}=&\frac{\sps}{\nm}\bigg(1+\sqrt{\frac{c_1}{\nm\ns}\log\frac{8}{\varepsilon}}\bigg)\\
&+ \sqrt{\frac{1}{\nm\ns}\log\frac{8}{\varepsilon}}~\text{max}\bigg\{\sqrt{
	c_{1}\frac{k}{m}},c_{2}\bigg\}\text{max}\bigg\{\bigg(1+\sqrt{\frac{c_1}{\nm}\log\frac{8\ns}{\varepsilon}}\bigg),\bigg(1+{\frac{c_2}{\nm}\log\frac{8\ns}{\varepsilon}}\bigg)\bigg\},
\end{align*}
when $n\ge (c_{2}^{2}/{c_{1}})\log(8/\varepsilon)$.
Putting these results together, we get
\begin{align*}
\bPr{{\frac{1}{\ns}\sum_{j=1}^{\ns}\alpha_{ji^\prime}^{2}\ge \mu+t}}\le \varepsilon,
\end{align*}
for
\begin{align*}
\mu+t
=&~\frac{\sps}{\nm}\bigg(1+\sqrt{\frac{c_1}{\nm\ns}\log\frac{8}{\varepsilon}}\bigg)\\
&+ \sqrt{\frac{1}{\nm\ns}\log\frac{8}{\varepsilon}}~\text{max}\bigg\{\sqrt{
	c_{1}\frac{k}{m}},c_{2}\bigg\}\text{max}\bigg\{\bigg(1+\sqrt{\frac{c_1}{\nm}\log\frac{8\ns}{\varepsilon}}\bigg),\bigg(1+{\frac{c_2}{\nm}\log\frac{8\ns}{\varepsilon}}\bigg)\bigg\}\nonumber\\
&+\sigma^2\bigg(1+\sqrt{\frac{c_1}{\nm\ns}\log\frac{2}{\varepsilon}}\bigg),
\end{align*}
when $n\ge (c_{2}^{2}/{c_{1}})\log(8/\varepsilon)$.

For the third term in \eqref{sep_re}, namely, $\max\bigg\{\sqrt{\frac{c_1}{\ns^2}\sum_{j=1}^{\ns}\alpha_{ji}^4\log\frac{1}{\delta^{\prime}}},\frac{c_2}{\ns}\max_{j\in[\ns]}\alpha_{ji}^2\log\frac{1}{\delta^{\prime}}\bigg\}$, we consider the possibility of either argument attaining the maximum and study the respective right tails.

First, we look at  $\bPr{{\sqrt{\frac{1}{\ns^{2}}\sum\limits_{j=1}^{\ns}\alpha_{ji}^{4}}\ge\mu+t}}$ for $i\in S$. We note that by the union bound,
\begin{align}
\bPr{{\sqrt{\frac{1}{\ns^{2}}\sum\limits_{j=1}^{\ns}\alpha_{ji}^{4}}\ge\mu+t}}
\le&~\sum_{j=1}^{\ns}\bPr{\alpha_{ji}^4\ge\ns(\mu+t)^2}\nonumber \\
\le&~ \ns\bPr{\|\bphi_{1i}\|_{2}^{4}\ge\frac{\sqrt{\ns}}{3}(\mu+t)}
+\ns\bPr{\sum_{l\in S\backslash\{i\}}(\bphi_{1i}^{\top}\bphi_{1l})^2\ge\frac{\sqrt{\ns}}{3}(\mu+t)}\nonumber\\
&+\ns\bPr{{\sigma^2\|\bphi_{1i}\|_{2}^{2}\ge\frac{\sqrt{\ns}}{3}(\mu+t)}}.\label{conc_term3_unb}
\end{align}
We use Lemma \ref{lem_sbx_tail} for the first and third terms and Lemma \ref{lem_ip_conc_2} for the second term. A direct application of Lemma \ref{lem_ip_conc_2} with $n=1$ for the second term however requires the assumption that {$m\ge (c_{2}^{2}/{c_{1}})\log(12 n/\varepsilon)$} (note that the second term in \eqref{conc_term3_unb} needs to be upper bounded by $\varepsilon/3n$). While in our setting such an assumption on $n$ is acceptable, we would like to avoid making this assumption on $m$ at this stage.
 We therefore omit the simplification done at the end of Lemma \ref{lem_ip_conc_2} 
 to get
\begin{equation*}
\bPr{{\sqrt{\frac{1}{\ns^{2}}\sum\limits_{j=1}^{\ns}\alpha_{ji}^{4}}\ge\mu+t}}\le\varepsilon
\end{equation*}
for
\begin{align*}\label{conc_term3}
\mu+t=&~ \frac{3}{\sqrt{\ns}}\bigg(1+\text{max}\bigg\{\sqrt{\frac{c_1}{\nm}\log\frac{3n}{\varepsilon}},\frac{c_2}{\nm}\log\frac{3n}{\varepsilon}\bigg\}\bigg)^{2}\\
&+\frac{3}{\sqrt{n}}\bigg(\frac{k-1}{m}+\text{max}\bigg\{\frac{c_{2}}{m}\log\frac{9n}{\varepsilon},\sqrt{c_{1}\frac{k-1}{m^{2}}\log\frac{9n}{\varepsilon}}\bigg\}\bigg)\bigg(1+\max\bigg\{\sqrt
{\frac{c_{1}}{m}\log\frac{9n}{\varepsilon}},\frac{c_{2}}{m}\log\frac{9n}{\varepsilon}\bigg\}\bigg)\\
&+\frac{3\sigma^2}{\sqrt{\ns}}\bigg(1+\text{max}\bigg\{\sqrt{\frac{c_1}{\nm}\log\frac{3\ns}{\varepsilon}},\frac{c_2}{\nm}\log\frac{3\ns}{\varepsilon}\bigg\}\bigg)\numberthis.
\end{align*}

Next, we look at  $\bPr{{\max_{j\in[\ns]}\alpha_{ji}^2\ge\mu+t}}$ for $i\in S$. We notice that by the union bound, we have
\begin{align*}
\bPr{{\max_{j\in[\ns]}\alpha_{ji}^2\ge\mu+t}}
&\le \sum_{j=1}^{n}\bPr{{\alpha_{ji}^2\ge \mu+t}}\\
&\le\sum_{j=1}^{n}\bigg[\bPr{{\|\bphi_{ji}\|_{2}^{4}\ge\frac{\mu+t}{3}}}+\bPr{{\sum_{l\in S\backslash\{i\}}(\bphi_{ji}^{\top}\bphi_{jl})^2\ge \frac{\mu+t}{3}}}\nonumber\\
&~~~~~~~~~+\bPr{{\sigma^{2}\|\bphi_{ji}\|_{2}^{2}\ge\frac{\mu+t}{3}}}\bigg].\numberthis\label{term3_max}
\end{align*}
 We now handle each of the three terms on the right-side of \eqref{term3_max} separately. We will use Lemma \ref{lem_z2leftright} for the first and third terms and Lemma \ref{lem_ip_conc} for the second term. 
 \begin{onlyarxiv}
 In particular, for every $j\in[n]$, we have that
\begin{align*}
\bPr{{\|\bphi_{ji}\|_{2}^{4}\ge\frac{\mu+t}{3}}}
&\le \varepsilon,
\end{align*}
for 
\begin{align*}
\mu+t=3\bigg(1+\max\bigg\{\sqrt{
\frac{c_{1}}{m}\log\frac{1}{\varepsilon}},\frac{c_{2}}{m}\log\frac{1}{\varepsilon}\bigg\}\bigg)^{2},
\end{align*}
 and that
\begin{align*}
\bPr{{\|\bphi_{1i}\|_{2}^{2}\ge\frac{\mu+t}{3\sigma^{2}}}}
&\le \varepsilon
\end{align*}
for 
\begin{align*}
\mu+t=3\sigma^{2}\bigg(1+\max\bigg\{\sqrt{
\frac{c_{1}}{m}\log\frac{1}{\varepsilon}},\frac{c_{2}}{m}\log\frac{1}{\varepsilon}\bigg\}\bigg).
\end{align*}
For the second term, we have that for every $j\in[n]$,
\begin{align*}
\bPr{{\sum_{l\in S\backslash\{i\}}(\bphi_{ji}^{\top}\bphi_{jl})^2\ge \frac{\mu+t}{3}}}
&\le \varepsilon,
\end{align*}
for
\begin{align*}
\mu+t=3\bigg(\frac{k-1}{m}+\sqrt{c_{2}\frac{k-1}{m^{2}}\log\frac{3}{\varepsilon}}\bigg)\bigg(1+\max\bigg\{\sqrt
{\frac{c_{1}}{m}\log\frac{3}{\varepsilon}},\frac{c_{2}}{m}\log\frac{3}{\varepsilon}\bigg\}\bigg).
\end{align*}
\end{onlyarxiv}
Substituting those bounds into \eqref{term3_max}, we get
\begin{align*}
\bPr{{\max_{j\in[\ns]}\alpha_{ji}^2\ge\mu+t}}
&\le 3n\varepsilon,
\end{align*}
for
\begin{equation*}
\mu+t=3(1+f(m,\varepsilon))\cdot\max\bigg\{\sigma^{2},1+f(m,\varepsilon),\frac{k-1}{m}+\sqrt{c_{2}\frac{k-1}{m^{2}}\log\frac{3}{\varepsilon}}\bigg\}
\end{equation*} 
where $f(m,\varepsilon)=\max\{\sqrt
{\frac{c_{1}}{m}\log\frac{3}{\varepsilon}},\frac{c_{2}}{m}\log\frac{3}{\varepsilon}\}$. That is,
\begin{equation*}
\bPr{\frac{1}{n}{\max_{j\in[\ns]}\alpha_{ji}^2\ge\mu+t}}\le\varepsilon
\end{equation*}
for
\begin{equation}\label{term3_max_2}
\mu+t=\frac{3}{n}(1+f(m,\varepsilon/3n))\cdot\max\bigg\{\sigma^{2},1+f(m,\varepsilon/3n),\frac{k-1}{m}+\sqrt{c_{2}\frac{k-1}{m^{2}}\log\frac{9n}{\varepsilon}}\bigg\}.
\end{equation} 
{Comparing \eqref{conc_term3} and \eqref{term3_max_2}, we see that $\frac{1}{n}\max_{j\in[\ns]}\alpha_{ji}^2$ is $O(k/mn+\sigma^{2}/n)$ which decays faster with respect to $n$ compared to $\sqrt{\frac{1}{\ns^{2}}\sum\limits_{j=1}^{\ns}\alpha_{ji}^{4}}$, which is $O(k/m\sqrt{n}+\sigma^{2}/\sqrt{n})$. Thus, the third term in \eqref{sep_re} is dominated by the $O(k/m\sqrt{n}+\sigma^{2}/\sqrt{n})$ term, which is what we retain in our subsequent calculations.}

Finally, for the fourth term in \eqref{sep_re}, we first look at $\bPr{{\sqrt{({1}/{n^{2}})\sum\limits_{j=1}^{\ns}\alpha_{ji^\prime}^{4}}\ge \mu+t}}$ for $i^{'}\in S^{c}$. Using similar arguments as in the previous calculation, we get
\begin{equation*}
\bPr{{\sqrt{\frac{1}{\ns^{2}}\sum\limits_{j=1}^{\ns}\alpha_{ji^\prime}^{4}}\ge\mu+ t}}\le \varepsilon,
\end{equation*}
for 
\begin{align*}
\mu+t
&=\frac{2}{\sqrt{n}}\bigg(\frac{k}{m}+\text{max}\bigg\{\frac{c_{2}}{m}\log\frac{6n}{\varepsilon},\sqrt{c_{1}\frac{k}{m^{2}}\log\frac{6n}{\varepsilon}}\bigg\}\bigg)
\bigg(1+\max\bigg\{\sqrt
{\frac{c_{1}}{m}\log\frac{6n}{\varepsilon}},\frac{c_{2}}{m}\log\frac{6n}{\varepsilon}\bigg\}\bigg)\\
&+\frac{2\sigma^2}{\sqrt{\ns}}\bigg(1+\text{max}\bigg\{\sqrt{\frac{c_1}{\nm}\log\frac{2\ns}{\varepsilon}},\frac{c_2}{\nm}\log\frac{2\ns}{\varepsilon}\bigg\}\bigg).
\end{align*}
The $\frac{1}{n}\max_{j\in[\ns]}\alpha_{ji^{\prime}}^2$ term, as we discussed before, will lead to a $O(k/mn)$ factor, which can be ignored.

The foregoing calculations provide bounds on each of the four terms occuring in \eqref{sep_re}, that hold with high probability. We note that the left-side of \eqref{sep_re} is lower bounded by
\begin{align*}
1-\frac{1}{m}-2\sqrt{\frac{c_{1}}{mn}\log\frac{24}{\varepsilon}}\bigg(1+\sigma^{2}+\frac{k}{m}\bigg)+\frac{c_{1}}{mn}\log\frac{6}{\varepsilon}\\
-2\sqrt{\frac{c_{1}k}{m^{2}n}\log\frac{24}{\varepsilon}}\bigg(1+\frac{c_{2}}{m}\log\frac{24n}{\varepsilon}\bigg)\numberthis\label{term12lb}
\end{align*}
with probability at least $1-\varepsilon$, and that the right-side is upper bounded by
\begin{align*}
5\sqrt{\frac{c_{1}}{n}\log\frac{1}{\delta^{\prime}}}
&\bigg[
\bigg(1+\text{max}\bigg\{\sqrt{\frac{c_1}{\nm}\log\frac{6n}{\varepsilon}},\frac{c_2}{\nm}\log\frac{6n}{\varepsilon}\bigg\}\bigg)^{2}\\*
&+\bigg(\frac{k}{m}+\text{max}\bigg\{\frac{c_{2}}{m}\log\frac{18n}{\varepsilon},\sqrt{c_{1}\frac{k}{m^{2}}\log\frac{18n}{\varepsilon}}\bigg\}\bigg)\bigg(1+\max\bigg\{\sqrt
{\frac{c_{1}}{m}\log\frac{18n}{\varepsilon}},\frac{c_{2}}{m}\log\frac{18n}{\varepsilon}\bigg\}\bigg)\\
&+\sigma^2\bigg(1+\text{max}\bigg\{\sqrt{\frac{c_1}{\nm}\log\frac{6\ns}{\varepsilon}},\frac{c_2}{\nm}\log\frac{6\ns}{\varepsilon}\bigg\}\bigg)\bigg],\numberthis\label{term34ub}
\end{align*}
with probability at least $1-\varepsilon$. 
To ensure that \eqref{sep_re} holds with probability at least $1-\varepsilon$ for a fixed $(i,i^{\prime})\in S\times S^{c}$, we need that \eqref{term12lb} exceeds \eqref{term34ub}.
For further simplification, we assume $m$ to be sufficiently large to handle the $\log n$ terms. This assumption on $m$ can possibly be removed by handling the sum in Lemma \ref{lem_z4right} and \eqref{conc_term3_unb} directly and not using the union bound. Choosing $\varepsilon=\delta/(4~k(d-k))$ to account for the union bound over all $(i,i^{\prime})$ pairs and focusing on the  $n=O((k/m+1+\sigma^{2})^{2}\log(1/\delta^{\prime}))$ regime, we see that \eqref{term12lb} exceeds \eqref{term34ub} and separation holds 
if\footnote{{We use this condition 
		to show that $(1/\sqrt{m})\log (k/m)\leq 1$
		and the dominating term on the right-side of~\eqref{term34ub} 
		is $k/m$.}} $m\ge (\log k)^{2}$.

 Thus,
\begin{equation*}
n\geq c \bigg(\frac{\sps}{\nm} +1+\sigma^2\bigg)^2\log \frac {1}{\delta^{\prime}} 
\end{equation*}
samples suffice to ensure separation between the typical values and to guarantee that \eqref{sep_re} holds with probability at least $1-\delta$.

\end{proof}

\section{Key technical lemmas}\label{app_lemmas}
\begin{onlyjournal}
Proofs of the results in this section can be found in~\cite{Ramesh_arxiv_2019}.
\end{onlyjournal}

\begin{lemma-app} \label{lem_z2leftright}
	Let $Z_{1},\ldots,Z_{\ns}$ be independent, mean-zero random vectors in $\mR^\nm$ with independent {strictly subgaussian} entries with variance $1/\nm$. Then, there exist absolute constants $c_1$ and $c_2$ such that for any $t>0$,
	\begin{align*}
		\bPr{\bigg\rvert\frac{1}{\ns}\sum_{j=1}^{\ns}\|Z_{j}\|_{2}^{2}-1\bigg\rvert\ge t}\le 2\exp\bigg(-\text{min}\bigg\{\frac{\nm\ns}{c_1}t^{2},\frac{\nm\ns}{c_2}t\bigg\}\bigg).
	\end{align*} 
	Equivalently, for any $\epsilon>0$,
	\begin{equation*}
\bPr{\bigg\rvert\frac{1}{\ns}\sum_{j=1}^{\ns}\|Z_{j}\|_{2}^{2}-1\bigg\rvert\ge \text{max}\bigg\{\sqrt{\frac{c_1}{\nm\ns}\log\frac{2}{\varepsilon}},\frac{c_2}{\nm\ns}\log\frac{2}{\varepsilon}\bigg\}}
\le \varepsilon.
\end{equation*}
\end{lemma-app}

\begin{onlyarxiv}
\begin{proof}
	Since $Z_{jl}\sim\sbg(1/\nm)$ for any $j\in[\ns]$ and $l\in[\nm]$, we have from Lemma \ref{lem_sbx_sbg} that $Z_{jl}^{2}\sim\sbx(c_{1}/\nm^2,c_{2}/\nm)$ for some absolute constants $c_{1}$ and $c_{2}$. Using properties of subexponential random variables from Lemma \ref{lem_sbx_txf}, we can show that the normalized sum $\frac{1}{\ns}\sum_{j=1}^{\ns}\|Z_{j}\|_{2}^{2}$ is also subexponential with parameters $(c_1/\nm\ns,c_2/\nm\ns)$. Noting that $\bEE{\frac{1}{\ns}\sum_{j=1}^{\ns}\|Z_{j}\|_{2}^{2}}=1$ and using the tail bound from Lemma \ref{lem_sbx_tail} we get for $t>0$,
\begin{align}\label{sbx_tail_abs}
\bPr{\bigg\rvert\frac{1}{\ns}\sum_{j=1}^{\ns}\|Z_{j}\|_{2}^{2}-1\bigg\rvert \ge t}\le 2\exp\bigg(-\text{min}\bigg\{\frac{\nm\ns}{c_1}t^2,\frac{\nm\ns}{c_2}t\bigg\}\bigg).\numberthis
\end{align}
For the right side to be at most $\ep>0$, we see that it suffices to have
\begin{align*}
t\ge \text{max}\bigg\{\sqrt{\frac{c_1}{\nm\ns}\log\frac{2}{\ep}},\frac{c_2}{\nm\ns}\log\frac{2}{\ep}\bigg\}.
\end{align*}
Substituting the above into \eqref{sbx_tail_abs} gives us the result.	
\end{proof}
\end{onlyarxiv}

\begin{lemma-app} \label{lem_z4left}
	Let $Z_{1},\ldots,Z_{\ns}$ be independent, mean-zero random vectors in $\mR^\nm$ with independent strictly subgaussian entries with variance $1/\nm$. Then, there exist absolute constants $c_1$ and $c_2$ such that for any $\varepsilon>0$,
	\begin{equation*}
		\bPr{{\frac{1}{\ns}\sum_{j=1}^{\ns}\|Z_{j}\|_{2}^{4}\le\text{min}\bigg\{\bigg(1-\sqrt{\frac{c_1}{\nm\ns}\log\frac{1}{\varepsilon}}\bigg)^{2},\bigg(1-\frac{c_2}{\nm\ns}\log\frac{1}{\varepsilon}\bigg)^{2}\bigg\}}}\le\varepsilon.
	\end{equation*}
\end{lemma-app}

\begin{onlyarxiv}
\begin{proof}
	Let $\mu=\bEE{\frac{1}{\ns}\sum_{j=1}^{\ns}\|Z_{j}\|_{2}^{4}}=1+2/\nm$, and $t<\mu$. Then, using Jensen's inequality, we have
	\begin{align*}
		\bPr{{\frac{1}{\ns}\sum_{j=1}^{\ns}\|Z_{j}\|_{2}^{4}\le\mu-t}}
		&\le\bPr{{\frac{1}{\ns}\sum_{j=1}^{\ns}\|Z_{j}\|_{2}^{2}\le\sqrt{\mu-t}}}\\
		&=\bPr{{\frac{1}{\ns}\sum_{j=1}^{\ns}\|Z_{j}\|_{2}^{2}-1\le -t^\prime}},
	\end{align*}
	where $t^\prime=1-\sqrt{\mu-t}$. Using Lemma \ref{lem_z2leftright} and reparameterizing with respect to $\varepsilon>0$ gives the result.

\end{proof}
\end{onlyarxiv}

\begin{lemma-app}\label{lem_z4right}
	Let $Z_{1},\ldots,Z_{\ns}$ be independent, mean-zero random vectors in $\mR^\nm$ with independent strictly subgaussian entries with variance $1/\nm$. Then, there exist absolute constants $c_1$ and $c_2$ such that for any $\varepsilon>0$,
	\begin{equation*}
		\bPr{\frac{1}{\ns}\sum_{j=1}^{\ns}\|Z_{j}\|_{2}^{4}\ge\text{max}\bigg\{\bigg(1+\sqrt{\frac{c_1}{\nm}\log\frac{\ns}{\varepsilon}}\bigg)^{2},\bigg(1+{\frac{c_2}{\nm}\log\frac{\ns}{\varepsilon}}\bigg)^{2}\bigg\}}\le \varepsilon.
	\end{equation*}
\end{lemma-app}

\begin{onlyarxiv}
\begin{proof}
	
	Let $\mu=\bEE{\frac{1}{\ns}\sum_{j=1}^{\ns}\|Z_{j}\|_{2}^{4}}$ and note as we did in Lemma \ref{lem_z2leftright} that $\|Z_{j}\|_{2}^{2}\sim\sbx(c_{1}/\nm,c_{2}/\nm)$. We have	by union bound that
	\begin{align*}
		\bPr{\frac{1}{\ns}\sum_{j=1}^{\ns}\|Z_{j}\|_{2}^{4}\ge\mu+t}
		&\le \sum_{j=1}^{\ns}\bPr{\|Z_{j}\|_{2}^{2}-1\ge \sqrt{\mu+t}-1}\\
		&\le \ns \exp\bigg(-\text{min}\bigg\{\frac{\nm(t^\prime)^2}{c_1},\frac{\nm t^\prime}{c_2}\bigg\}\bigg)
	\end{align*}
	where the last inequality follows from Lemma \ref{lem_sbx_tail} with $t^\prime=\sqrt{\mu+t}-1$.
	Equating the expression on the right to $\varepsilon$ and reparameterizing gives the result.

\end{proof}
\end{onlyarxiv}
\begin{lemma-app}\label{lem_ip_conc}
	Let $Z_{j},Y_{j1},\ldots,Y_{j,k-1}$, $j\in[\ns]$, be independent, mean-zero random vectors in $\mR^\nm$ with independent strictly subgaussian entries with variance $1/\nm$. Let $\mu=\bEE{\frac{1}{\ns}\sum_{j=1}^{\ns}\sum_{l=1}^{k-1}(Y_{jl}^{\top}Z_{j})^{2}}$. Then, there exist absolute constants $c_1$ and $c_2$ such that for any $\varepsilon>0$,
	\begin{equation*}
		\bPr{\frac{1}{\ns}\sum_{j=1}^{\ns}\sum_{l=1}^{k-1}(Y_{jl}^{\top}Z_{j})^{2}\le\mu-t}\le\varepsilon,
	\end{equation*}
	for
	\begin{align*}
	\mu-t 
	=&~ \frac{\sps-1}{\nm}\bigg(1-\sqrt{\frac{c_1}{\nm\ns}\log\frac{4}{\varepsilon}}\bigg)\nonumber\\
	&-\sqrt{\frac{1}{\nm\ns}\log\frac{4}{\varepsilon}}~\text{max}\bigg\{\sqrt{
		c_{1}\frac{k-1}{m}},c_{2}\bigg\}\text{max}\bigg\{\bigg(1+\sqrt{\frac{c_{1}}{\nm}\log\frac{4\ns}{\varepsilon}}\bigg),\bigg(1+{\frac{c_{2}}{\nm}\log\frac{4\ns}{\varepsilon}}\bigg)\bigg\},
	\end{align*}

\end{lemma-app}
when $n\ge(c_{2}^{2}/{c_{1}})\log (4/\varepsilon)$.

\begin{onlyarxiv}
\begin{proof}
Note that conditioned on $Z_j$, the random variable $Y_{jl}^{\top}Z_{j}$ is subgaussian with parameter $\|Z_{j}\|_{2}^{2}/\nm$, for any $j\in[\ns]$ and $l\in[\sps-1]$. Using Lemmas \ref{lem_sbx_sbg} and \ref{lem_sbx_txf}, we have that the normalized sum $(1/\ns)\sum_{j=1}^{\ns}\sum_{l=1}^{k-1}(Y_{jl}^{\top}Z_{j})^{2}$, conditioned on $\{Z_{j}\}_{j=1}^{\ns}$, is subexponential with parameters $v^{2}$ and $b$ where
\begin{align*}
v^{2}=\frac{c_{1}}{\nm^2\ns^2}(k-1)\sum_{j=1}^{\ns}\|Z_{j}\|_{2}^{4},\quad b=\frac{c_2}{\nm\ns}\text{max}_{j\in[\ns]}\|Z_{j}\|_{2}^{2}.
\end{align*}
	Let $\mu^\prime=\bEE{\frac{1}{\ns}\sum_{j=1}^{\ns}\sum_{l=1}^{k-1}(Y_{jl}^{\top}Z_{j})^{2}\bigg\rvert \{Z_{j}\}_{j=1}^{\ns}}$. 
	 Since the variance and the variance parameter are equal, we have
	\[\mu^{\prime}=\frac{k-1}{\nm\ns}\sum_{j=1}^{\ns}\|Z_{j}\|_{2}^{2}.\]
  From Lemma \ref{lem_sbx_tail} we have that for $t>0$,
	\begin{align*}
&\lefteqn{		\bPr{\frac{1}{\ns}\sum_{j=1}^{\ns}\sum_{l=1}^{k-1}(Y_{jl}^{\top}Z_{j})^{2}\le\mu-t\bigg\rvert \{Z_{j}\}_{j=1}^{\ns}}}
\nonumber		
\\
&=	\bPr{\frac{1}{\ns}\sum_{j=1}^{\ns}\sum_{l=1}^{k-1}(Y_{jl}^{\top}Z_{j})^{2}-\mu^{\prime}\le\mu-t-\mu^{\prime}\bigg\rvert \{Z_{j}\}_{j=1}^{\ns}}\\
		&\le \exp\bigg(-\text{min}\bigg\{\frac{\nm^{2}\ns^{2}(t^{\prime})^{2}}{c_1(\sps-1)\sum_{j=1}^{\ns}\|Z_{j}\|_{2}^{4}},\frac{\nm\ns t^{\prime}}{c_2\text{max}_{j\in[\ns]}\|Z_{j}\|_{2}^{2}}\bigg\}\bigg)\numberthis\label{ip}
	\end{align*}
	where $t^\prime=\mu^\prime+t-\mu$. We now handle the $Z_{j}$-dependent terms in the exponent. In particular, we require upper bounds on the terms in the denominator and a lower bound on $\mu^{\prime}$ that hold with high probability. Recall that from Lemma \ref{lem_z4right}, we have
	\begin{align*}
		\bPr{\frac{1}{\ns}\sum_{j=1}^{\ns}\|Z_{j}\|_{2}^{4}
		\le
		\text{max}\bigg\{\bigg(1+\sqrt{\frac{c_1}{\nm}\log\frac{\ns}{\varepsilon}}\bigg)^{2},\bigg(1+{\frac{c_2}{\nm}\log\frac{\ns}{\varepsilon}}\bigg)^{2}\bigg\}}\ge 1-\varepsilon.
	\end{align*}
	
	Also, from Lemma \ref{lem_z2leftright}, we have that
	\begin{equation*}
		\bPr{\frac{1}{\ns}\sum_{j=1}^{\ns}\|Z_{j}\|_{2}^{2}\ge 1-\text{max}\bigg\{\sqrt{\frac{c_1}{\nm\ns}\log\frac{1}{\varepsilon}},\frac{c_2}{\nm\ns}\log\frac{1}{\varepsilon}\bigg\}}
		\ge 1-\varepsilon.
	\end{equation*}
	Finally, by independence of $Z_{j}$,
	\begin{align*}
		\bPr{{\text{max}_{j\in[\ns]}\|Z_{j}\|_{2}^{2}\le\mu + t}}
		&=\prod_{j=1}^{\ns}\bPr{\|Z_{j}\|_{2}^{2}\le\mu + t}\\
		&\ge \bigg(1-\exp\bigg(-\text{min}\bigg\{\frac{\nm(\mu+t-1)^{2}}{c_1},\frac{\nm(\mu+t-1)}{c_2}\bigg\}\bigg)\bigg)^{\ns}\\
		&\ge 1-\ns\exp\bigg(-\text{min}\bigg\{\frac{\nm(\mu+t-1)^{2}}{c_1},\frac{\nm(\mu+t-1)}{c_2}\bigg\}\bigg),
	\end{align*}
	which gives
	\begin{equation*}
		\bPr{\text{max}_{j\in[\ns]}\|Z_{j}\|_{2}^{2}\le 1+\text{max}\bigg\{\sqrt{\frac{c_1}{\nm}\log\frac{\ns}{\varepsilon}},\frac{c_2}{\nm}\log\frac{\ns}{\varepsilon}\bigg\}}
		\ge 1-\varepsilon.
	\end{equation*}
	Using these results together with \eqref{ip}, we have
	\begin{align}\label{ip_uncond}
		\bPr{{\frac{1}{\ns}\sum_{j=1}^{\ns}\sum_{l=1}^{k-1}(Y_{jl}^{\top}Z_{j})^{2}\le\mu-t}}
		\le & 
		\exp\bigg(-\text{min}\bigg\{\frac{\nm^{2}\ns(\frac{\sps-1}{\nm}\beta_{1}+t-\mu)^{2}}{c_1(\sps-1)\beta_{2}},
		\frac{\nm\ns (\frac{\sps-1}{\nm}\beta_{1}+t-\mu)}{c_2\beta_{3}}\bigg\}\bigg)\nonumber\\
		&+\frac{3\varepsilon}{4},\numberthis
	\end{align}
	where
	\begin{align*}
		\beta_1&=1-\text{max}\bigg\{\sqrt{\frac{c_1}{\nm\ns}\log\frac{4}{\varepsilon}},\frac{c_2}{\nm\ns}\log\frac{4}{\varepsilon}\bigg\},\\
		\beta_{2}&=\text{max}\bigg\{\bigg(1+\sqrt{\frac{c_1}{\nm}\log\frac{4\ns}{\varepsilon}}\bigg)^{2},\bigg(1+{\frac{c_2}{\nm}\log\frac{4\ns}{\varepsilon}}\bigg)^{2}\bigg\},\\
\beta_{3}&=1+\text{max}\bigg\{\sqrt{\frac{c_1}{\nm}\log\frac{4\ns}{\varepsilon}},\frac{c_2}{\nm}\log\frac{4\ns}{\varepsilon}\bigg\}.
	\end{align*}
Now, the first term on the right side of \eqref{ip_uncond} equals $\varepsilon/4$ if
\begin{align}\label{ip_conc_left_beta}
\mu-t=\frac{k-1}{m}\beta_{1}-\text{max}\bigg\{\sqrt{\frac{c_{1}\beta_{2}(k-1)}{m^{2}n}}\log\frac{4}{\varepsilon},\frac{c_{2}\beta_{3}}{mn}\log\frac{4}{\varepsilon}\bigg\}.
\end{align}
The expression above can be simplified under some mild assumptions on $n$. In particular, when $mn\ge(c_{2}^{2}/{c_{1}})\log (4/\varepsilon)$ and $m\ge(c_{2}^{2}/{c_{1}})\log(4n/\varepsilon)$, then \eqref{ip_conc_left_beta} simplifies to
\begin{align*}
\mu-t 
= \frac{\sps-1}{\nm}\bigg(1-\sqrt{\frac{c_1}{\nm\ns}\log\frac{4}{\varepsilon}}\bigg)
-\sqrt{\frac{1}{\nm\ns}\log\frac{4}{\varepsilon}}\text{max}\bigg\{\sqrt{c_{1}\frac{k-1}{m}},c_{2}\bigg\}\bigg(1+\sqrt{\frac{c_{1}}{\nm}\log\frac{4\ns}{\varepsilon}}\bigg).
\end{align*}
On the other hand, when $mn\ge(c_{2}^{2}/{c_{1}})\log 4/\varepsilon$ and $m<(c_{2}/\sqrt{c_{1}})\log(4n/\varepsilon)$, we have
\begin{align*}
\mu-t 
= \frac{\sps-1}{\nm}\bigg(1-\sqrt{\frac{c_1}{\nm\ns}\log\frac{4}{\varepsilon}}\bigg)
-\sqrt{\frac{1}{\nm\ns}\log\frac{4}{\varepsilon}}\text{max}\bigg\{\sqrt{c_{1}\frac{k-1}{m}},c_{2}\bigg\}\bigg(1+{\frac{c_{2}}{\nm}\log\frac{4\ns}{\varepsilon}}\bigg),
\end{align*}
which gives us the following simplified version of \eqref{ip_conc_left_beta} when $n\ge (c_{2}^{2}/{c_{1}})\log(4/\varepsilon)$:
\begin{align*}
		\mu-t 
		=&~ \frac{\sps-1}{\nm}\bigg(1-\sqrt{\frac{c_1}{\nm\ns}\log\frac{4}{\varepsilon}}\bigg)\nonumber\\
		&-\sqrt{\frac{1}{\nm\ns}\log\frac{4}{\varepsilon}}~\text{max}\bigg\{\sqrt{
		c_{1}\frac{k-1}{m}},c_{2}\bigg\}\text{max}\bigg\{\bigg(1+\sqrt{\frac{c_{1}}{\nm}\log\frac{4\ns}{\varepsilon}}\bigg),\bigg(1+{\frac{c_{2}}{\nm}\log\frac{4\ns}{\varepsilon}}\bigg)\bigg\}.
	\end{align*}
	This completes the proof.
\end{proof}	
\end{onlyarxiv}
\begin{lemma-app}\label{lem_ip_conc_2}
	Let $Z_{j},Y_{j1},\ldots,Y_{j,k-1}$, $j\in[\ns]$, be independent, mean-zero random vectors in $\mR^\nm$ with independent strictly subgaussian entries with variance $1/\nm$. Let $\mu=\bEE{\frac{1}{\ns}\sum_{j=1}^{\ns}\sum_{l=1}^{k-1}(Y_{jl}^{\top}Z_{j})^{2}}$. Then, there exist absolute constants $c_1$ and $c_2$ such that for any $\varepsilon>0$,
	\begin{equation*}
	\bPr{{\frac{1}{\ns}\sum_{j=1}^{\ns}\sum_{l=1}^{k-1}(Y_{jl}^{\top}Z_{j})^{2}\ge\mu+t}}\le\varepsilon,
	\end{equation*}
	for
	 	\begin{align*}
	\mu+t=&\frac{\sps-1}{\nm}\bigg(1+\sqrt{\frac{c_1}{\nm\ns}\log\frac{4}{\varepsilon}}\bigg)\\
	&+ \sqrt{\frac{1}{\nm\ns}\log\frac{4}{\varepsilon}}~\text{max}\bigg\{\sqrt{
		c_{1}\frac{k-1}{m}},c_{2}\bigg\}\text{max}\bigg\{\bigg(1+\sqrt{\frac{c_1}{\nm}\log\frac{4\ns}{\varepsilon}}\bigg),\bigg(1+{\frac{c_2}{\nm}\
		\frac{4\ns}{\varepsilon}}\bigg)\bigg\},
	\end{align*}
\end{lemma-app}
when $n\ge(c_{2}^{2}/{c_{1}})\log (4/\varepsilon)$.

\begin{onlyarxiv}
\begin{proof}
The proof is similar to that of Lemma \ref{lem_ip_conc}. We start by noting that conditioned on $\{Z_{j}\}_{j=1}^{\ns}$, the normalized sum  $(1/\ns)\sum_{j=1}^{\ns}\sum_{l=1}^{k-1}(Y_{jl}^{\top}Z_{j})^{2}$ is subexponential with parameters $v^{2}$ and $b$ where
\begin{align*}
v^{2}=\frac{c_{1}}{\nm^2\ns^2}(k-1)\sum_{j=1}^{\ns}\|Z_{j}\|_{2}^{4},\quad b=\frac{c_2}{\nm\ns}\text{max}_{j\in[\ns]}\|Z_{j}\|_{2}^{2}.
\end{align*}
Again, using the tail bound for subexponential random variables, we get
\begin{align*}
\bPr{{\frac{1}{\ns}\sum_{j=1}^{\ns}\sum_{l=1}^{k-1}(Y_{jl}^{\top}Z_{j})^{2}\ge\mu+t\bigg\rvert \{Z_{j}\}_{j=1}^{\ns}}}
&=\bPr{{\frac{1}{\ns}\sum_{j=1}^{\ns}\sum_{l=1}^{k-1}(Y_{jl}^{\top}Z_{j})^{2}-\mu^{\prime}\ge\mu+t-\mu^{\prime}\bigg\rvert \{Z_{j}\}_{j=1}^{\ns}}}\\
&\le \exp\bigg(-\text{min}\bigg\{\frac{\nm^{2}\ns^{2}(t^{\prime})^{2}}{c_1(\sps-1)\sum_{j=1}^{\ns}\|Z_{j}\|_{2}^{4}},\frac{\nm\ns t^{\prime}}{c_2\text{max}_{j\in[\ns]}\|Z_{j}\|_{2}^{2}}\bigg\}\bigg)\numberthis\label{ip_2}
\end{align*}
where
 \[\mu^\prime=\bEE{\frac{1}{\ns}\sum_{j=1}^{\ns}\sum_{l=1}^{k-1}(Y_{jl}^{\top}Z_{j})^{2}\bigg\rvert \{Z_{j}\}_{j=1}^{\ns}}=\frac{k-1}{\nm\ns}\sum_{j=1}^{\ns}\|Z_{j}\|_{2}^{2},\]
and $t^\prime=\mu+t-\mu^{\prime}$. To handle the $Z_{j}$-dependent terms in the exponent, we require high probability upper bounds on the terms in the denominator and on $\mu^{\prime}$. 
 Proceeding as in the proof of Lemma \ref{lem_ip_conc}, we have the following bounds on the terms in the denominator in \eqref{ip_2}:
 	\begin{align*}
 \bPr{{\frac{1}{\ns}\sum_{j=1}^{\ns}\|Z_{j}\|_{2}^{4}
 \le
 \text{max}\bigg\{\bigg(1+\sqrt{\frac{c_1}{\nm}\log\frac{\ns}{\varepsilon}}\bigg)^{2},\bigg(1+{\frac{c_2}{\nm}\log\frac{\ns}{\varepsilon}}\bigg)^{2}\bigg\}}}\ge 1-\varepsilon.
 \end{align*}
and
 \begin{equation}\label{conc_z2max}
 \bPr{\text{max}_{j\in[\ns]}\|Z_{j}\|_{2}^{2}\le 1+\text{max}\bigg\{\sqrt{\frac{c_1}{\nm}\log\frac{\ns}{\varepsilon}},\frac{c_2}{\nm}\log\frac{\ns}{\varepsilon}\bigg\}}
 \ge 1-\varepsilon.
 \end{equation}
 Also, from Lemma \ref{lem_z2leftright},
 \begin{align}\label{conc_z2right}
 \bPr{\frac{1}{\ns}\sum_{j=1}^{\ns}\|Z_{j}\|_{2}^{2}\le
  1+ \text{max}\bigg\{\sqrt{\frac{c_1}{\nm\ns}\log\frac{1}{\varepsilon}},\frac{c_2}{\nm\ns}\log\frac{1}{\varepsilon}\bigg\}}
 \ge 1-\varepsilon.
 \end{align}
 We note that although a high probability upper bound on $\text{max}_{j\in[\ns]}\|Z_{j}\|_{2}^{2}$ implies a high probability upper bound on  $(1/\ns)\sum_{j=1}^{\ns}\|Z_{j}\|_{2}^{2}$, we specifically use the bound in \eqref{conc_z2right} since the deviation term has better dependence on $n$ (which is lost in \eqref{conc_z2max} due to a union bound step). A $\sqrt{(1/m)\log(n/\varepsilon)}$ or $(1/m)\log(n/\varepsilon)$ type dependence, on the other hand, would lead to constraints on $m$.
 
 Using these results along with \eqref{ip_2}, we have
 \begin{align}\label{ip_2_uncond}
 \bPr{{\frac{1}{\ns}\sum_{j=1}^{\ns}\sum_{l=1}^{k-1}(Y_{jl}^{\top}Z_{j})^{2}\ge\mu+t}}
 \le & 
 \exp\bigg(-\text{min}\bigg\{\frac{\nm^{2}\ns(\mu+t-\frac{\sps-1}{\nm}\beta_{1})^{2}}{c_1(\sps-1)\beta_{2}},
 \frac{\nm\ns (\mu+t-\frac{\sps-1}{\nm}\beta_{1})}{c_2\beta_{3}}\bigg\}\bigg)\nonumber\\
 &+\frac{3\varepsilon}{4},\numberthis
 \end{align}
 where
\begin{align*}
\beta_{1}=1+ \text{max}\bigg\{\sqrt{\frac{c_1}{\nm\ns}\log\frac{4}{\varepsilon}},\frac{c_2}{\nm\ns}\log\frac{4}{\varepsilon}\bigg\},
\end{align*}
 \begin{align*}
 \beta_{2}&=\text{max}\bigg\{\bigg(1+\sqrt{\frac{c_1}{\nm}\log\frac{4\ns}{\varepsilon}}\bigg)^{2},\bigg(1+{\frac{c_2}{\nm}\log\frac{4\ns}{\varepsilon}}\bigg)^{2}\bigg\},
 \end{align*}
 and
   \begin{align*}
 \beta_{3}=1+\text{max}\bigg\{\sqrt{\frac{c_1}{\nm}\log\frac{4\ns}{\varepsilon}},\frac{c_2}{\nm}\log\frac{4\ns}{\varepsilon}\bigg\}=\sqrt{\beta_{2}}.
 \end{align*}
 Simplifying as we did in Lemma \ref{lem_ip_conc} under the assumption that $\ns\ge(c_{2}^{2}/{c_{1}})\log (4/\varepsilon)$, we see that if
 	\begin{align*}
 		\mu+t=&\frac{\sps-1}{\nm}\bigg(1+\sqrt{\frac{c_1}{\nm\ns}\log\frac{4}{\varepsilon}}\bigg)\\
 	&+ \sqrt{\frac{1}{\nm\ns}\log\frac{4}{\varepsilon}}~\text{max}\bigg\{\sqrt{
 		c_{1}\frac{k-1}{m}},c_{2}\bigg\}\text{max}\bigg\{\bigg(1+\sqrt{\frac{c_1}{\nm}\log\frac{4\ns}{\varepsilon}}\bigg),\bigg(1+{\frac{c_2}{\nm}\
 			\frac{4\ns}{\varepsilon}}\bigg)\bigg\},
 	\end{align*}
 	then the first term on the right side of \eqref{ip_2_uncond} is less than $\varepsilon/4$, which completes the proof.
\end{proof}
\end{onlyarxiv}
\section{Fourth moment of the minimum eigenvalue of a Wishart matrix}\label{app_smin}

\begin{lemma-app}\label{lem_smin}
	Let $\bphi\in\mR^{\nm\times\sps}$ with independent $\cN(0,1)$ entries and let $A=\bphi\bphi^{\top}$. If $Z$ denotes the minimum eigenvalue of $A$, then for $\sps-\nm>7$, 
	\begin{equation*}
	\bEE {Z^{-4}}\le\frac{c}{\sps^{4}(1-\nm/\sps)^{8}}.
	\end{equation*}
\end{lemma-app}
\begin{proof}
	Since $Z$ is a nonnegative random variable, we have for $\theta>0$,
	\begin{align*}
	\bEE {Z^{-4}}
	&\le
	\tht+\int_{\tht}^{\infty}\bPr{Z\le u^{-\frac{1}{4}}}du\\
	&\le \tht+\frac{8}{\sps^{4}}\int_{0}^{\frac{\tht^{-\frac{1}{8}}}{\sqrt{\sps}}}\bPr{Z\le
		k\ep^{2}}\frac{1}{\ep^{9}}d\ep,
	\end{align*} 
	where we used $u^{-\frac{1}{4}}=\sps\ep^{2}.$
	The density of the smallest eigenvalue of a Wishart matrix  with parameters $k$ and $m$ ($A$ in this case) is
	known in closed form \cite [Lemma 4.1]{Chen_SIAM_2005}, which we restate here:
	\begin{align*}
	\bPr{Z\le \sps\ep^{2}}&\le\frac{1}{\Gamma(\sps-\nm+2)}(\ep
	\sps)^{\sps-\nm+1}\\ &\le
	\bigg(\frac{e}{\sps-\nm+1}\bigg)^{\sps-\nm+1}(\ep
	\sps)^{\sps-\nm+1},
	\end{align*}
	where $\Gamma(\cdot)$ denotes the gamma function and $\Gamma(n)=(n-1)!$ for integer $n$. Using this, we get
	\begin{align*}
	\se Z^{-4} &\le \tht
	+\frac{8}{\sps^{4}}\bigg(\frac{e\sps}{\sps-\nm+1}\bigg)^{\sps-\nm+1}\int_{0}^{\frac{\tht^{-\frac{1}{8}}}{\sqrt{\sps}}}(\ep)^{\sps-\nm-8}d\ep\\ &=\tht+\frac{8}{\sps^{4}}\bigg(\frac{e\sps}{\sps-\nm+1}\bigg)^{\sps-\nm+1}
	\frac{1}{\sps-\nm-7}\bigg(\frac{\tht^{-\frac{1}{4}}}{\sps}\bigg)^{\frac{\sps-\nm-7}{2}}.
	\end{align*}
	Choosing $\tht=\bigg(\frac{e\sqrt{\sps}}{\sps-\nm+1}\bigg)^8$
	and simplifying gives
	\begin{align*}
	\bEE {Z^{-4}} \le
	\frac{9e^8\sps^4}{(\sps-\nm-7)^8}\le \frac{c}{\sps^4\bigg(1-\frac{\nm}{\sps}\bigg)^8}.
	\end{align*}
\end{proof}


\section{{Toolbox: Some useful concentration bounds}}\label{app_prelim}
\begin{onlyjournal}
Proofs of the results in this section can be found in \cite{Ramesh_arxiv_2019}.
\end{onlyjournal}

\begin{lemma-app}\label{lem_sbx_tail}
	Let $X$ be a subexponential random variable with parameters $v^{2}$ and $b>0$ (denoted $X\sim\sbx(v^{2},b)$), that is, 
	\begin{equation*}
	\bEE{\exp(\theta(X-\bEE{X}))}
	\le\exp\bigg(\frac{\theta^{2}v^{2}}{2}\bigg),\quad |\theta|<\frac{1}{b}.
	\end{equation*}	
	Then,
	\begin{equation*}
	\bPr{|X-\bEE{X}|\ge t}\le 2\exp\bigg(-\text{min}\bigg\{\frac{t^{2}}{2v^{2}},\frac{t}{2b}\bigg\}\bigg).
	\end{equation*}
\end{lemma-app}

\begin{onlyarxiv}
\begin{proof}
	See \cite[Proposition~ 2.2]{Wainwright_text_2019}.
\end{proof}
\end{onlyarxiv}

\begin{lemma-app}\label{lem_sbx_sbg}
	Let $X\sim\sbg(\sigma^2)$ with $\bEE X=0$. Then $X^2\sim\sbx(128\sigma^4,8\sigma^2)$.
\end{lemma-app}

\begin{onlyarxiv}
\begin{proof}
	Let $Y=X^2$. We start by upper bounding the moment generating function (MGF) of $Y$. For $\theta>0$,
	\begin{align*}
	\bEE{e^{\theta(Y-\bEE Y)}}
	&=\bEE{\sum_{q=0}^{\infty}\frac{(\theta(Y-\bEE Y))^q}{q!}}\\
	&\le 1+\sum_{q=2}^{\infty}\frac{(2\theta)^{q}\bEE{Y^{q}}}{q!}\\
	&=1+\sum_{q=2}^{\infty}\frac{(2\theta)^q}{q!}\bEE{X^{2q}},
	\end{align*}
	where in the second step we used 
	$(\bEE{|Y-\bEE{Y}|^{q}})^{\frac{1}{q}}\le
	(\bEE{|Y|^{q}})^{\frac{1}{q}}+\mu\le 2(\bEE{Y^{q}})^{\frac{1}{q}}$.
	
	Now, for $X\sim\sbg(\sigma^2)$, we have the following upper bound on the moments of $X$ from \cite[Theorem 2.1]{BLM_conc_text} :
	$\bEE{X^{2q}}\le 2q! 2^q \sigma^{2q}$.
	This gives
	\begin{align*}
	\bEE{e^{\theta(Y-\bEE Y)}}&\le 1+2\sum_{q=2}^{\infty}\frac{\theta^q}{q!}q! 2^{2q}
	\sigma^{2q}\\
	&=1+\frac{32\theta^2\sigma^4}{1-4\theta\sigma^2},\quad \theta<\frac{1}{4\sigma^{2}}.
	\end{align*}
	For $\theta\le1/8\sigma^2$, we get
	\begin{align*}
	\bEE{e^{\theta(Y-\bEE{Y})}}\le 1+64\theta^2\sigma^4
	\le e^{64\theta^2\sigma^4},
	\end{align*}
	that is, $Y\sim\sbx(128\sigma^4,8\sigma^2)$.
\end{proof}
\end{onlyarxiv}

\begin{lemma-app}\label{lem_sbx_txf}
	Let $X_i\sim\sbx(v_i^2,b_i)$ be independent subexponential random variables for $i\in[n]$. Then, for a constant $a\in\mR$, we have that $aX_1\sim\sbx(a^2v_1^2,|a|b_1)$ and $\sum_{i=1}^{n}X_i\sim\sbx(\sum_{i=1}^{n}v_{i}^{2},\text{max}_{i\in[n]}b_i)$.
\end{lemma-app}

\begin{onlyarxiv}
\begin{proof}
	The proof involves bounding the MGF of the transformed random variables and noting that it has the same form as the MGF of a subexponential random variable with the parameters appropriately transformed. Specifically, for $0<\theta<1/|a|b_{1}$, we have
	\begin{align*}
	\bEE{\exp(a\theta(X_{1}-\bEE{X_{1}}))}
	\le \exp\bigg(\frac{a^{2}\theta^{2}v_{1}^{2}}{2}\bigg),
	\end{align*}
	that is, $aX_{1}\sim\text{subexp}(a^{2}v_{1}^{2},|a|b_{1})$.
	
	Similarly, bounding the MGF of the sum $Y=\sum_{i=1}^{n}X_i$, we get
	\begin{align*}
	\bEE{\exp(\theta(Y-\bEE Y))}
	&=\prod_{i=1}^{n}\bEE{\exp(\theta(X_{i}-\bEE{X_{i}} ))}\\
	&\le\prod_{i=1}^{n}\exp\bigg(\frac{\theta^{2}v_{i}^{2}}{2}\bigg),
	\end{align*}
	when $|\theta|<1/b_{i}$ for all $i\in[n]$. That is, for  $|\theta|<1/(\text{max}_{i\in [n]}b_{i})$,
	\begin{align*}
		\bEE{\exp(\theta(Y-\bEE Y))}
		\le \exp\bigg(\theta^{2}\sum_{i=1}^{n}\frac{v_{i}^{2}}{2}\bigg)
	\end{align*}
	which shows that $Y\sim\text{subexp}(\sum_{i=1}^{n}v_{i}^{2},\text{max}_{i\in[n]}b_{i})$.
\end{proof}
\end{onlyarxiv}

\begin{lemma-app}\label{lem_moments}
	Let $W$ and $Z$ be $\nm$-dimensional random vectors having independent zero-mean subgaussian entries with variance $1/\nm$ and fourth moment $3/\nm^2$.
	Then,
	\begin{equation*}
	\bEE{\|Z\|_{2}^{4}}=1+\frac{2}{\nm},\quad \text{and}\quad 
	\bEE{(Z^\top W)^2}=\frac{1}{\nm}.
	\end{equation*}
\end{lemma-app}
\begin{onlyarxiv}
\begin{proof}
	The proof is based on a straightforward calculation. We have
	\begin{align*}
	\bEE{\|Z\|_{2}^{4}}
	&=\sum_{i=1}^{\nm}\bEE{Z_{i}^4}+\sum_{i\ne j}\bEE{Z_{i}^{2}Z_{j}^{2}}\\
	&=1+\frac{2}{\nm},
	\end{align*}
	and
	\begin{align*}
	\bEE{(Z^\top W)^2}
	&=\se\bigg[ \se\bigg[\bigg(\sum_{i=1}^{\nm}Z_{i}^{2}W_{i}^{2}+\sum_{i\ne j}Z_{i}W_{i}Z_{j}W_{j}\bigg)\bigg\rvert Z\bigg]\bigg]\\
	&=\frac{1}{\nm}.
	\end{align*}
\end{proof}
\end{onlyarxiv}

\begin{lemma-app}\label{lem_traceAB}
	Let $A, B\in\mR^{m\times m}$ be symmetric, positive definite matrices and let $a_1\ge\cdots\ge a_m$ and $b_1\ge\cdots\ge b_m$ denote their respective ordered eigenvalues. Then,
	\begin{equation*}
		\text{Tr}(AB)\le \sum_{i=1}^{m}a_i b_i.
	\end{equation*}
\end{lemma-app}
\begin{onlyarxiv}
\begin{proof}
The proof follows from  \cite[Theorem 20.B.1]{Marshall_text_2011}.
\end{proof}
\end{onlyarxiv}
\end{appendices}

\bibliography{IEEEabrv,bibfile}
\bibliographystyle{IEEEtranS}

\end{document}